\DeclareMathOperator*{\argmax}{argmax}
\newcommand\mpage[1]{\gape{$\vcenter{\hbox{#1}}$}}
 \newcommand{\tabincell}[2]{\begin{tabular}{@{}#1@{}}#2\end{tabular}}
\begin{document}
\title[Time-sync Video Tag Extraction Using SAG]{Time-sync Video Tag Extraction Using Semantic Association Graph}

\author{Wenmain Yang}
\orcid{0000-0001-8493-4449}
\affiliation{%
  \institution{Shanghai JiaoTong University; University of Macau}
  \country{China}}
\email{sdq11111@sjtu.edu.cn}
\author{Kun Wang}
\affiliation{%
  \institution{University of California, Los Angeles}
  \postcode{90095}
  \country{America}}
\email{wangk@ucla.edu}
\author{Na Ruan}
\author{Wenyuan Gao}
\affiliation{%
  \institution{Shanghai JiaoTong University}
  \streetaddress{No. 800, Dongchuan Road}
  \city{Shanghai}
  \postcode{200240}
  \country{China}}
\author{Weijia Jia*}
\affiliation{%
  \institution{University of Macau; Shanghai JiaoTong University}
\streetaddress{Corresponding Author}
  \city{Macau}
\postcode{999078}
  \country{China}
}
\email{jiawj@um.edu.mo}
\author{Wei Zhao}
\affiliation{%
 \institution{American University of Sharjah}
 \city{Sharjah}
 \country{United Arab Emirates}}
\email{weizhao@umac.mo}
\author{Nan Liu}
\author{Yunyong Zhang}
\affiliation{%
  \institution{China Unicom Research Institute}
  \streetaddress{Bldg.2, No,1 Beihuan East Road, Economic-Technological Development Area}
  \city{Beijing}
\postcode{100176}
  \country{China}
}

\begin{abstract}
Time-sync comments reveal a new way of extracting the online video tags. However, such time-sync comments have lots of noises due to users' diverse comments, introducing great challenges for accurate and fast video tag extractions. In this paper, we propose an unsupervised video tag extraction algorithm named Semantic Weight-Inverse Document Frequency (SW-IDF). Specifically, we first generate corresponding semantic association graph (SAG) using semantic similarities and timestamps of the time-sync comments. Second, we propose two graph cluster algorithms, i.e., dialogue-based algorithm and topic center-based algorithm, to deal with the videos with different density of comments. Third, we design a graph iteration algorithm to assign the weight to each comment based on the degrees of the clustered subgraphs, which can differentiate the meaningful comments from the noises. Finally, we gain the weight of each word by combining Semantic Weight (SW) and Inverse Document Frequency (IDF). In this way, the video tags are extracted automatically in an unsupervised way. Extensive experiments have shown that SW-IDF (dialogue-based algorithm) achieves 0.4210 F1-score and 0.4932 MAP (Mean Average Precision) in high-density comments, 0.4267 F1-score and 0.3623 MAP in low-density comments; while SW-IDF (topic center-based algorithm) achieves 0.4444 F1-score and 0.5122 MAP in high-density comments, 0.4207 F1-score and 0.3522 MAP in low-density comments. It has a better performance than the state-of-the-art unsupervised algorithms in both F1-score and MAP.
\end{abstract}

%
%
\begin{CCSXML}
<ccs2012>
<concept>
<concept_id>10002951.10003227.10003351</concept_id>
<concept_desc>Information systems~Data mining</concept_desc>
<concept_significance>500</concept_significance>
</concept>
<concept>
<concept_id>10002951.10003317.10003371.10003386.10003388</concept_id>
<concept_desc>Information systems~Video search</concept_desc>
<concept_significance>500</concept_significance>
</concept>
</ccs2012>
\end{CCSXML}

\ccsdesc[500]{Information systems~Data mining}
\ccsdesc[500]{Information systems~Video search}

%
%

\keywords{multimedia retrieval, video tagging, crowdsourced time-sync comments, semantic association graph, keywords extraction}

\maketitle

\renewcommand{\shortauthors}{Yang, W. et al.}

\section{Introduction}

Recently, watching online videos of news and amusement have become mainstream entertainment during people's leisure time. The booming of online video-sharing websites raises significant challenges in effective management and retrieval of videos. To address that, many text retrieval based automatic video tagging techniques have been proposed \cite{siersdorfer2009automatic,raamkumar2017using,ramaboa2018keyword}. However, these techniques can only provide video-level tags \cite{wu2014crowdsourced}. The problem is that even if these generated tags can perfectly summarize the video content, users have no idea how these tags are associated with the video playback time. If videos are associated with time-sync tags, users can preview the content with both thumbnails and text along the timeline, and this textual information can further enhance users' search experience. Although there are many video content analysis algorithms that can generate video tags with timestamps \cite{hussein2017v,chen2017video}, their time complexities are too high for large-scale video retrieval. Fortunately, a new type of review data, i.e., time-sync comments (TSCs) appear on video websites like Youku (www.youku.com), AcFun (www.acfun.tv) and BiliBili (www.bilibili.com) in China, and NicoNico (www.nicovideo.jp) in Japan.

In this paper, we focus on extracting time-sync video tags from TSCs efficiently, which can enhance users' search experience. When watching a video, many people are willing to share their feelings and exchange ideas with others. TSC is such a new form of real-time and interactive crowdsourced comments \cite{wang2017crowdsourcing,wang2016toward,gu2017reliable,hyung2017utilizing}. TSCs are displayed as streams of moving subtitles overlaid on the video screen, and convey information involving the content of current video frame, feelings of users or replies to other TSCs. In TSC-enabled online video platforms, users can make their comments synchronized to a video's playback time. That is, once a user posts a TSC, it will be synchronized to the associated video time and immediately displayed onto the video. All viewers (including the writer) of the video can see the TSCs when they watch around the associated video time. Moreover, each TSC has a timestamp to record the corresponding video time when posted. Therefore, compared with traditional video reviews, TSCs are much easier to obtain the local tags with timestamp rather than video-level tags. Moreover, the TSCs are more personalized than traditional reviews, therefore the tags generated by TSCs can better reflect the user's perspective. The users can thereby get high-quality retrieval results when they search for videos with these tags \cite{wu2014crowdsourced}.

Recently, some methods have been proposed to generate temporal tags or labels based on TSCs. Wu et al. \cite{wu2014crowdsourced} use statistics and topic model to build Temporal and Personalized Topic Modeling (TPTM) to generate temporal tags. However, their approach is based on the Latent Dirichlet Allocation (LDA) model \cite{blei2003latent}, which has poor performance when dealing with short and noisy text like TSC \cite{yan2013biterm}.
Lv et al. \cite{lv2016reading} propose a Temporal Deep Structured Semantic Model (T-DSSM) to generate video labels in a supervised way. However, their approach does not consider the semantic association between TSCs, so that some of the video content-independent noises cannot be processed. In summary, TSCs have some features distinguished from the common comments \cite{yang2017crowdsourced,liao2018tscset}, which make the above methods not very effective in the TSCs:
\emph{(1) Semantic relevance.} Abundant video semantic information is contained that describes both local and global video contents by selecting the time interval of the timestamp.
\emph{(2) Real-time.} TSC is synchronous to the real-time content of the videos. Users may produce different topics when it comes to the same video contents.
\emph{(3) Herding effects.} Herding effects are common in TSCs \cite{he2016predicting,yu2015discovering}. That means, latter TSCs may depend on the former ones and have a semantic association with the preceding ones.
\emph{(4) Noise.} Some video content-independent comments and internet slang are included in TSCs, which makes trouble for tag extraction.
Due to the above features of TSCs, how to deal with the herding effects, distinguishing the importance of each TSC and consequently identify high-impact TSCs and noises are the major challenges for extracting video tags from TSCs. 

To make full use of the features of TSC and tackle the above challenges, we propose a graph-based algorithm named Semantic Weight-Inverse Document Frequency (SW-IDF) to generate time-sync video tags automatically. More precisely, we design to reduce the impact of noises by clustering the semantic similar and time-related TSCs and identify high-impact TSCs by their semantic relationships. Intuitively, TSCs including video tags are usually within hot topics and impact on the trend of their follow-up TSCs. On the contrary, the noises usually neither have similar semantic relationships with other TSCs over a period nor influence other TSCs \cite{yang2017crowdsourced}. Moreover, we find that the density of TSCs (number of TSCs per unit time) affects how users communicate. When the density is low (the TSC in a period is sparse), the user can more clearly distinguish the content of each nearby TSC, and therefore is more likely for the user to reply to a specific TSC when posting the new one. Conversely, when the density is high (the TSC in a period is dense), the user cannot clearly distinguish the content of each TSC, but only roughly distinguish the topic of these TSCs. Therefore, the user is more likely to reply to the entire topic instead of a specific TSC.
Specifically, in the SW-IDF algorithm, we first treat the TSCs as vertices, generating the semantic association graph (SAG) based on semantic similarities and timestamps of TSCs.
Then, we intend to cluster TSCs into different topics. For the videos with low-density TSCs, we propose a dialogue-based clustering algorithm, which is inspired by community detection theory \cite{huang2017overlapping,fortunato2010community,lancichinetti2009community}. For the videos with high-density TSCs, we propose a topic center-based cluster algorithm, which is a novel hierarchical agglomerative clustering \cite{pandove2018systematic,murtagh2014ward,murtagh2012algorithms}. These two cluster algorithms can identify the topic of each TSC and distinguish the popularity of each topic in any case.
In the clustered subgraph, the in-degrees of each TSC express its affecting TSCs, while the out-degrees express its affected TSCs. Therefore, we design a graph iteration algorithm to assign the weight of each TSC by its degrees so that we can differentiate the meaningful TSCs from noises. Moreover, similar to TF-IDF algorithm, we gain the weight of each word by combining Semantic Weight (SW) and Inverse Document Frequency (IDF) and the video tags are extracted automatically.

Particularly, this paper is an extended version of \cite{yang2017crowdsourced}. In this extended version, we propose a novel topic center-based cluster algorithm at first, which is more suitable for high-density TSCs. Then, we provide a greedy optimization for the topic center-based algorithm and prove this optimization will not delete any valid case. Finally, we add more experiments to verify the effectiveness of the algorithms.
The main contributions of our paper are as follows:
\begin{enumerate}[1)]
\item We propose a novel graph-based Semantic Weight-Inverse Document Frequency (SW-IDF) algorithm, which can extract video tags in an unsupervised way by mining TSCs.

\item We design two graph clustering algorithms based on the density of the TSCs, i.e., dialogue-based clustering algorithm and topic center-based cluster algorithm, to cluster in the semantic association graph (SAG). These algorithms take the features of TSCs into account and effectively reduce the impact of noises.

\item We evaluate our proposed algorithms with real-world datasets on mainstream video-sharing websites and compare results with classical keyword extraction methods. The results show that SW-IDF outperforms baselines in both precision and recall of video tag extraction.
\end{enumerate}

In the rest of the paper, we introduce the related work in Section \ref{sec:2}, and then formally propose our algorithm in Section \ref{sec:3}. In Section \ref{sec:4}, we verify the effectiveness of our algorithm with experimental results. Conclusions of this work are presented in Section \ref{sec:5}.

\section{Related Work}

\label{sec:2}
In this section, we introduce the related work from four aspects.

\subsection{Analysis of time-sync video comments}
Time-Sync Comments (TSCs) provide a new source of information regarding the video and have received growing research interests. Wu et al. \cite{wu2014crowdsourced} first introduce TSCs and propose a Temporal and Personalized Topic Modeling (TPTM) to generate temporal tags. However, their approach is based on the Latent Dirichlet Allocation (LDA) model \cite{blei2003latent}, which has poor performance when dealing with short text like TSC \cite{yan2013biterm}. To describe the video more specifically, Xu and Zhang \cite{xu2017bridging} extract representative TSCs based on a temporal summarization model. Their methods need the pre-extracted keywords in the TSCs, so our algorithm can improve the effectiveness of them. There are also some other applications based on TSCs. Lv et al. \cite{lv2016reading} propose a Temporal Deep Structured Semantic Model (T-DSSM) to represent comments as semantic vectors and recognize video highlights by semantic vectors in a supervised way. They are the first to analyze the TSC using the neural network. Then, Chen et al. \cite{chen2017personalized} propose the neural network based collaborative filtering to recommend the personalized keyframe from TSCs. However, both the models of \cite{lv2016reading} and \cite{chen2017personalized} rely on a large amount of human-labeled video segments or predefined emotional tags to train, which limits its applicability to more general scenarios. In this paper, we design a novel graph-based algorithm according to the features of TSC to efficiently and accurately extract keywords automatically in an unsupervised way.

\subsection{Tag/keyword extraction} Keyword extraction is a classical problem in the field of information retrieval. At present, mainly three categories unsupervised keyword extraction methods are available. The first one is based on word frequency statistics, where TF-IDF is the most commonly used and well-known method. However, this kind of methods only consider the frequency of words and ignore the semantics, which may generate keywords that are not related to video content. The second kind of methods depends on the co-occurrence of words, such as textrank \cite{mihalcea2004textrank}, which is a graph-based ranking model. Similar to the first one, this kind of methods does not consider semantics either, so it cannot solve the noise well. And the last one is according to the topic model. It brings document-topic and topic-word distribution together by simulating document generation process. Blei et al. \cite{blei2003latent} propose the Latent Dirichlet Allocation(LDA) model, the most representative model. To better deal with short text situation, Yan et al. \cite{yan2013biterm} propose the Bi-term Topic Model (BTM), which models the generation of word co-occurrence patterns (i.e., bi-terms) in the whole corpus directly. Yin and Wang \cite{yin2014dirichlet,yin2016model} propose the Gibbs Sampling algorithm for the Dirichlet multinomial mixture model for short text clustering and keyword extraction. Although the topic model-based approaches consider the semantics, their basic hypothesis is that the generation of each word is independent and identically distributed. However, some TSCs are generated by herding effects, which does not satisfy the assumptions. Compared with the methods above, our algorithms are well-designed to identify noises by analyzing the semantic relationship between TSCs.

\subsection{Semantic similarity} Semantic similarity calculation is an essential issue of natural language processing, which is widely used in text classification \cite{wang2016world}, fuzzy retrieval \cite{alhabashneh2017fuzzy}, and so on. Generally, there are mainly two kinds of approaches to measuring the similarity of documents. One is based on the similarity of the words in sentences. The representations of this approach are proposed by \cite{kenter2015short} on unsupervised learning and \cite{socher2011parsing,kusner2015word} on supervised learning. Considering that time-sync comments contain a mass of newborn internet slangs, it is difficult to obtain accurate results in this way. The other one is based on the sentence vector. The topic model such as LDA, and embedding model such as word2vec \cite{mikolov2013distributed,levy2015improving} are the representations of this kind of methods. Since the embedding model offers much denser feature representation, embedding based similarity computation is better TSCs than the topic model-based methods. Kenter and De Rijke \cite{kenter2015short} propose a supervised learning method based on external sources of semantic knowledge with word embedding, which considers the weight of the semantic feature. In this paper, we only consider the topics discussed by TSCs while the word order will not change the topics discussed in the TSCs. Therefore, the word order is not important and the sentence2vec \cite{iacobacci2015sensembed,levy2014neural} and deep learning  \cite{he2015multi,mueller2016siamese} based methods are not used in this paper.

\subsection{Graph clustering algorithm} Graph clustering algorithms have attracted much research interest in the past. There are two main theories, i.e., community detection theory and hierarchical agglomerative clustering inspired our work. Community detection theory is first proposed by \cite{newman2004finding} to make natural divisions of network nodes into densely connected subgroups, which brings great inspiration to the graph clustering field. Recently, Ramezani et al. \cite{ramezani2018community} exploit the diffusion information and utilize the conditional random fields to discover the community structures. Li et al. \cite{li2018local} propose a novel local expansion via minimum one norm approach for finding overlapping communities, and provide the theoretical analysis of the local spectral properties. Chakraborty et al. \cite{chakraborty2016permanence} find that the belongingness of nodes in a community is not uniform and design a new vertex-based metric to quantify the degree of belongingness within a community. To reduce the time complexity, Bae et al. \cite{bae2017scalable} propose an algorithm to optimize the map equation, which makes the iterations take less time, and the algorithm converges faster. These above-mentioned community detection theory based graph clustering algorithms provide us with good inspiration for designing dialogue-based clustering algorithms. Besides, hierarchical agglomerative clustering is also a method of graph clustering \cite{pandove2018systematic,murtagh2014ward,murtagh2012algorithms}. Recently, Pang et al. \cite{pang2015unsupervised} propose a topic-restricted similarity diffusion process to efficiently identify real topics from a large number of candidates. Although their method has a good clustering effect, it has a high time complexity and is not suitable for large-scale data. Compared with the aforementioned hierarchical agglomerative clustering algorithms, we proposed a novel topic center-based clustering algorithm have lower time complexity under the condition of ensuring accuracy.

\section{Algorithms}
\label{sec:3}
In this section, we first introduce the construction of Semantic Association Graph (SAG) for TSCs with their semantic similarity in Section \ref{graph}. Then, we propose two graph cluster algorithms, i.e., dialogue-based algorithm and topic center-based algorithm, to cluster the TSCs into subgraphs of different topics in Section \ref{3.2}. Moreover, we propose an out-in degree iterative algorithm to get the weight of each TSC and extract keywords as video tags automatically by combining Semantic Weight (SW) and inverse document frequency (IDF) in Section \ref{3.4}. Finally, we give the complexity analysis in Section \ref{ca}. 

The Notation list is shown in Table \ref{notation}.

\subsection{Preliminaries and Graph Construction}
\label{graph}
\begin{table}[!t]
\centering
\caption{Notations}
\label{notation}
\begin{tabular}{|c|l|}
\hline
\hline
$G$ & Directed graph \\
\hline
$V$ & Set of nodes \\
\hline
$E$ & Set of Edges \\
\hline
$N$ & Number of nodes in $V$ \\
\hline
$M$ & Number of edges in $E$ \\
\hline
$v_{i}$ & $i-th$ node in $V$ \\
\hline
$e_{i}$ & $i-th$ edge in $E$\\
\hline
$t_{i}$ & Timestamp of  node $i$\\
\hline
$v_{i}.S$ &Topic set of node $v_{i}$\\
\hline
$|S|$ & Number of nodes in set $S$\\
\hline
$e_{i}.x$ & The first node of edge $i$\\
\hline
$e_{i}.y$ & The second node of edge $i$\\
\hline
$e_{i}.w$  & The weight of edge $e_{i}$\\
\hline
$\gamma_{t}$ & Attenuation coefficient\\
\hline
$ \rho_{d}$ & Threshold of dialogue bsed intra-cluster density\\
\hline
$ \rho_{c}$ & Threshold of topic based intra-cluster density\\
\hline
$vec_i $ & The embedding vector of TSC $i$\\
\hline
$S.center$ & Topic center vector of the set $S$\\
\hline
$S.st$ & Start time of the set $S$\\
\hline
$S.ct$ & Center time of the set $S$\\
\hline
$ST$ & Universal set of topic sets\\
\hline
$match_{i}$ &A set that matches $S_{i}$\\
\hline
$maxval_{i}$ &Max Affinity value of set $S_{i}$\\
\hline
${\rm Aff}Q$ & A priority queue with set pairs\\
\hline
$Ulist$ & A queue with sets to be updated\\
\hline
$P_{i}$ & Popularity of comment $i$\\
\hline
$K$ & Total number of topics in SAG\\
\hline
$\mathbb{M}_{N \times N}$ & Influence matrix\\
\hline
$I_{i,k}$ & Influence value of comment i after k iterations\\
\hline
$W_{i}$ & Weight of comment $i$\\
\hline
\hline
\end{tabular}
\end{table}

In this section, we construct the semantic association graph and define the attributes in the graph.

Since TSCs appear in chronological order, they can only affect the upcoming TSCs rather than prior TSCs. We use a directed graph to describe the relationships between TSCs and construct the semantic association graph (SAG). 

In SAG, the vertices (nodes) are TSCs and the edges reflect their semantic association in a topic. Let $G$ denote the directed graph, represented by $G =(V,E)$, where $V$ and $E$ are the sets of nodes and edges. Specifically, $V=\{v_{1}, v_{2}, ..., v_{N}\}$, $E=\{e_{1}, e_{2}, ..., e_{M}\}$, where $N$ is the number of nodes in $V$, and $M$ is the number of edges in $E$. For each TSC $i$, it has a timestamp $t_{i}$, denoting the post time in the video, where $t_{v_{1}}<t_{v_{2}}<...<t_{v_{N}}$. Since the TSCs are the short texts \cite{wu2014crowdsourced}, in our algorithm, we assume that each TSC has one exact topic. For vertex $v_{i}$, $v_{i}.S$ is used to describe the set that contain the vertices which have the same topic as $v_{i}$ and $|S|$ is used to express the number of vertices in set $S$. We use the domain to describe the attributes of edges. For edge $e_{k}$, $e_{k}.x$ and $e_{k}.y$ are two vertices that are linked by edge $e_{k}$ where $t_{e_{k}.x} < t_{e_{k}.y}$. The weight of edge $i$ is described as $e_{i}.w$. Besides, $e_{u,v}$ also describes the edge with vertices $u$ and $v$ where $t_{u}<t_{v}$. Next we will provide the definition of edge weights.

As we mentioned in Section \ref{sec:2}, an embedding based method word2vec (more details see Section \ref{exset}) is selected to calculate the semantic similarity between each pair of TSCs. Since we only care about the topic of the TSC, the word order is not important. In this paper, we calculate the mean vector of each word in a TSC as the sentence vector. We set the dimension of each vector as $d$. Therefore, the semantic similarity between TSC $a$ and $b$ is calculated by the cosine angle between vectors:
 \begin{equation}
Sim(a,b) = \frac{\vec a \cdot \vec b}{|\vec a| |\vec b|}.
 \end{equation}

Besides, the greater the timestamp interval between two TSCs, the less likely they are in the same topic. So we use the exponential function to express the decay of TSC associations:
 \begin{equation}
\label{decay}
Delay(a,b) = exp^{-\gamma_{t} \cdot (t_{b}-t_{a})},
 \end{equation}
where $\gamma_{t}$ is a hyperparameter that control the decay speed.

Combining the semantic similarity and the time decay, the weight of edge $i$ that link vertices $u$ and $v$ is defined by
\begin{equation}
e_{i.w}=\begin{cases}
Sim(u,v)\cdot Delay(u,v) &\text{if $t_{u}<t_{v}$}\\
0 &\text{if $t_{u}>t_{v}$}\end{cases}.
\end{equation}

Empirically derived threshold, two TSCs with a negative weight edge are less semantically related (because their angle in the semantic embedding space is greater than $\pi/2$), and negative edge weights are inconvenient to calculate in graph algorithms. Therefore, when $e_{u,v}.w<0$, we set $e_{u,v}.w=0$ and delete this edge.

\begin{figure*}[t]
\centering
\includegraphics[width=1\linewidth]{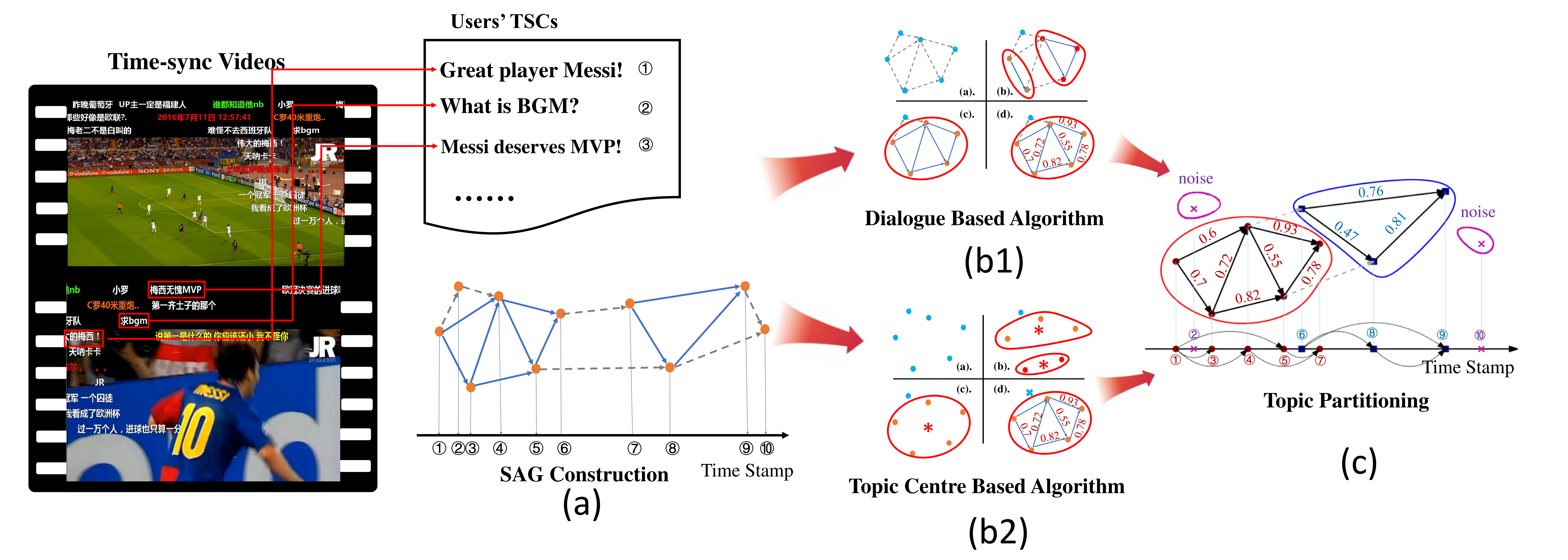}
\caption{An example of SAG Construction}\label{fig:graph}
\end{figure*}
For a more intuitive description, an example of SAG construction is shown in Fig. \ref{fig:graph} (a), which is a UEFA Champions League video. We select 10 TSCs as nodes and construct the SAG. User A made the TSC \textcircled{1} as ``Great player Messi!'' when he saw the goal. Then user B responded with ``Messi deserves MVP!" as the TSC \textcircled{3}. User C makes a TSC ``What is the BGM ?'' as TSC \textcircled{2} to ask the background music, which deviates the video content. So the TSC \textcircled{2} has the less semantic association with other TSCs, while TSC \textcircled{1} and TSC \textcircled{3} have a semantic edge.

\subsection{Topic Partitioning}
\label{3.2}
In this section, we will partition the topic of each TSC according to the semantic relationships in SAG. In our algorithm, the TSC that has the similar semantics and similar timestamps should belong to the same topic. However, the density of TSCs (number of TSCs per unit time) affects how users communicate. Therefore, we propose a dialogue-based cluster algorithm in Section \ref{dia} for the videos with sparse TSCs and a topic center-based cluster algorithm in Section \ref{tc} for the videos with dense TSCs.

\subsubsection{Dialogue-based Algorithm}

\label{dia}
First, we provide a dialogue-based algorithm. When the density of TSCs is low, the user can more clearly distinguish the content of each nearby TSC, and therefore is more likely for the user to reply to a specific TSC when posting the new one. Therefore, we cluster the TSCs according to the semantic relationship between each pair. The main idea is that the mean weight of edges in intra-topic is large while the mean weight of edges that link different topics is small, which satisfies community detection theory \cite{lancichinetti2008benchmark}.

Specifically, in the beginning, each TSC belongs to a unique topic. We use a unique set that only contains itself to achieve the objective. That is, for $v_{i}$, $v_{i}.S=\{i\}$. Then edges in set $E$ are sorted by descending order of weight. The new edge set $E^{'}=\{e_{1}^{'},e_{2}^{'},...,e_{k}^{'},...,e_{M}^{'}\}$ is obtained, where $e_{1}^{'}.w>e_{2}^{'}.w>...>e_{M}^{'}.w$. We process each edge from $e_{1}^{'}$ to $e_{M}^{'}$. For edge $e_{k}^{'}$, $S_{1}$ and $S_{2}$ represent the set $e_{k}^{'}.x.S$ and $e_{k}^{'}.y.S$. 
The set $S_1$ and $S_2$ should be merged if and only if TSCs in two sets discuss the similar topics. Therefore, we merge $S_{1}$ and $S_{2}$ if
\begin{equation}
S_{1} \neq S_{2}
\end{equation}
and
\begin{equation}
\frac{\sum\limits_{e_{p}.x, e_{p}.y \in {S_{1} \cup S_{2}}} e_{p}.w}{(|S_1 \cup S_2|)\cdot (|S_1 \cup S_2|-1)/2} > \rho_{d},
\end{equation}
where $\rho_{d}$ is the threshold of intra-cluster density. That is, we merge S1 and S2 only if the average edge weight of the their union is greater than the threshold. In this paper, disjoint-set (union-find set) algorithm \cite{tarjan1975efficiency} is used to merge the sets efficiently. When all the edges are solved, TSCs with high semantic similarity are merged into a topic, and the intra-cluster density of each subgraph is higher than the threshold. 

An example of dialogue-based topic partitioning is shown in Fig. \ref{fig:graph} (b1). The SAG constructed in Fig. \ref{fig:graph} (a) is finally partitioned into two topics marked as red and blue, and several noises marked as purple in Fig. \ref{fig:graph} (c) . The TSC ``Great player Messi!'' and ``Messi deserves MVP!" belong to the red topic, while the TSC ``What is the BGM ?'' is identified as a noise.

The full algorithm is shown in Algorithm \ref{dialougue}.

\begin{algorithm}[h]
 \caption{Dialogue-based algorithm}
 \label{dialougue}
 \renewcommand{\algorithmicrequire}{\textbf{Input}}
\renewcommand{\algorithmicensure}{\textbf{Output}}
 \begin{algorithmic}[1]
  \REQUIRE the edge set $E$
  \ENSURE the topic set of each time-sync comment
  \STATE sort $E$ by descending order of $e_{i}.w$, obtain $E^{'}$
  \FOR {$i$ = 1 to $M$}
    \STATE set $e_{i}^{'}.x.S$ as $S_{1}$, $e_{i}^{'}.y.S$ as $S_{2}$
    \IF {$\frac{\sum\limits_{e_{p}.x, e_{p}.y \in {S_{1} \cup S_{2}}} e_{p}.w}{(|S_{1}|+|S_{2}|)\cdot (|S_{1}|+|S_{2}|-1)/2} > \rho_{d}$}
        \STATE merge $S_{1}$ and $S_{2}$
    \ENDIF
  \ENDFOR
 \end{algorithmic}
\end{algorithm}

\subsubsection{Topic Center-based Algorithm}
\label{tc}
In the dialogue-based algorithm, we assume that TSCs are in the form of dialogues. However, when the density of TSCs is high, the user cannot clearly distinguish the content of each TSC, but only roughly distinguish the topic of these TSCs. Therefore, the user is more likely to reply to the entire topic instead of a specific TSC. The results of dialogue-based model will be disturbed by these situations. Therefore, we provide a Topic Center-based algorithm, which is inspired by Hierarchical Agglomerative Clustering \cite{pang2015unsupervised,murtagh2014ward,murtagh2008hierarchical,murtagh2012algorithms}.

Before proposing this algorithm,  the definition of topic center is given at first. As we defined in Section \ref{graph}, the set is used to describe the topic and each TSC can be express as an embedding vector by word2vec. The topic center is the average vectors of all TSCs within the topic. We use $S.center$ to express the topic center vector, and  $S.st$ and $S.ct$ to express the start time and center time of topic set $S$, respectively. Initially, each TSC belongs to a unique topic, so $v_{i}.S.center=vec_{i}$, $v_{i}.S.st=v_{i}.S.ct=t_{i}$, where $vec_{i}$ is the sentence embedding vector of TSC $i$. All these sets belong to $ST$, which is the universal set of topic sets.

Generally, this algorithm can be divided into two parts. (1) Find the nearest two topic centers. (2)  Merge the two topic centers. It is actually a Nearest Neighbor Search (NNS) problem \cite{bentley1975multidimensional,alstrup2000new}, where the k-d tree \cite{bentley1990k,friedman1977algorithm,bentley1975multidimensional} is one of the most effective methods. However, the analyses of binary search trees have found that the worst case time for range search in a k-dimensional k-d tree containing N nodes is given by the following equation \cite{lee1977worst}: $t_{worst}=O(k\cdot N^{1- \frac{1}{k}})$. Besides, the k-d tree has a larger constant. 

In this paper, we propose a greedy algorithm to solve this problem efficiently.  In the beginning, for each $S_{i} \in ST$, we find $S_{j}\in ST$ that
 \begin{equation}
 \label{match}
 S_{j}=\mathop{\argmax}_{j}{{\rm Affinity} (S_i,S_j)},
\end{equation}  where
 \begin{equation}
 \label{dist}
 {\rm Affinity} (S_i,S_j)=Sim(S_{i}.center,S_{j}.center) \cdot exp^{-\gamma_{t} \cdot (|S_{j}.ct-S_{i}.st|)}.
\end{equation}
The decay function is still added to avoid that the topics with large time interval are merged. 

We use $match_{i}$ to express the set that matches $S_{i}$ with maximum ${\rm Affinity} (S_i,match_{i})$ value $maxval_{i}$. And the pair $(S_{i},match_{i})$ is added to a queue ${\rm Aff}Q$, which is a priority queue where the pair $(S_{k},match_{k})$ with the maximum $maxval_{k}$ is the front. 

Each time, we take out the front pair $(S_{i},S_{j})$,  merging $S_{i}$ and $S_{j}$, and pop it, until ${\rm Affinity}(S_i,S_j)<\rho_{c}$. When merging sets, the following updates will be done: First, since $S_{i}$ and $S_{j}$ are merged, all pairs that contain $S_{i}$ or $S_{j}$, for instance $(S_{i},S_{u})$, should be deleted from ${\rm Aff}Q$. Then, these sets that matched $S_{i}$ or $S_{j}$ previously like $S_{u}$ are added into the update list $Ulist$. Next, the sets $S_{i}$ and $S_{j}$ are removed from $ST$, and a new set $S_{v}$ is added into $ST$ and $Ulist$,
where
 \begin{equation}
 \label{center}
 S_{v}.center=\frac{S_{i}.center\cdot |S_{i}|+S_{j}.center\cdot |S_{j}|}{|S_{i}|+|S_{j}|},
\end{equation}
\begin{equation}
 \label{st}
S_{v}.st=min(S_{i}.st,S_{j}.st),
\end{equation}
 and
 \begin{equation}
  \label{ct}
 S_{v}.ct=\frac{S_{i}.ct\cdot |S_{i}|+S_{j}.ct\cdot |S_{j}|}{|S_{i}|+|S_{j}|}.
 \end{equation}
 That is, the center time and the center vector of $S_v$ are the weighted average of $S_i$ and $S_j$, and the start time of $S_v$ is the minimum of $S_i$ and $S_j$. Finally, for each set $S_u\in Ulist$, we find a new $match_{u}$ according to Eq.(\ref{match}) in $ST$ to match it.

What is more, there exists a greedy optimization in the algorithm. Before giving the greedy optimization, we propose a lemma at first:\\
\begin{lemma}
For the set $S_{i}$, let $match_{i}=S_{j}$. Then the pair $(S_{i},S_{j})$ will never be solved in ${\rm Aff}Q$ if ${\rm Affinity}(S_i,S_j)<maxval_{j}$. 
\label{l1}
\end{lemma}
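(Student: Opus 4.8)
The statement is about the priority queue $\mathrm{Aff}Q$ used in the topic center-based merging loop. Recall that $\mathrm{Aff}Q$ stores, for each active set $S_\ell$, exactly one pair $(S_\ell,\mathit{match}_\ell)$ with key $\mathit{maxval}_\ell = {\rm Affinity}(S_\ell,\mathit{match}_\ell)$, and each iteration pops the pair with the largest key. So the claim ``the pair $(S_i,S_j)$ will never be solved'' must mean: in no iteration will $(S_i,S_j)$ be the pair sitting at the front of $\mathrm{Aff}Q$ when it is popped. The plan is to show that whenever $(S_i,S_j)$ is present in $\mathrm{Aff}Q$ (i.e.\ $\mathit{match}_i = S_j$ and both $S_i,S_j$ are still active), there is a \emph{different} pair in $\mathrm{Aff}Q$ — namely the one anchored at $S_j$ — whose key strictly exceeds the key of $(S_i,S_j)$; hence $(S_i,S_j)$ cannot be at the front, and before it ever could reach the front it must first be deleted (because $S_j$ gets merged away in the iteration that pops $S_j$'s own pair).

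First I would set up the invariant precisely: at any point in the execution, for every active set $S_\ell$ the queue contains $(S_\ell,\mathit{match}_\ell)$ with $\mathit{match}_\ell = \argmax_{S_m\in ST, m\neq \ell}{\rm Affinity}(S_\ell,S_m)$ and $\mathit{maxval}_\ell={\rm Affinity}(S_\ell,\mathit{match}_\ell)$; this is exactly how the algorithm initializes and re-establishes these quantities after each merge via the $Ulist$ update. Next, suppose at some iteration $(S_i,S_j)$ is in $\mathrm{Aff}Q$ — so in particular $S_j$ is active — and suppose the hypothesis ${\rm Affinity}(S_i,S_j) < \mathit{maxval}_j$ holds \emph{at that moment}. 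Then the pair $(S_j,\mathit{match}_j)$ is also in $\mathrm{Aff}Q$, with key $\mathit{maxval}_j > {\rm Affinity}(S_i,S_j) = \mathit{maxval}_i$. Therefore the front of $\mathrm{Aff}Q$ is not $(S_i,S_j)$. For $(S_i,S_j)$ to ever be popped it would have to survive until it is at the front; but as long as $S_j$ is active, the pair $(S_j,\mathit{match}_j)$ (or a successor of it with an even larger or equal key, by the standard monotonicity of agglomerative merges — merging can only shrink affinities toward the new center, but the \emph{current} $\mathit{maxval}_j$ argument suffices without invoking that) blocks it; and the moment $S_j$ stops being active is precisely an iteration in which some pair anchored at $S_j$ is popped and $S_j$ is merged, at which point the update step explicitly deletes \emph{all} pairs containing $S_j$, including $(S_i,S_j)$. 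Either way $(S_i,S_j)$ is removed from $\mathrm{Aff}Q$ without ever having been solved.

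The one subtlety I need to handle carefully — and this is the main obstacle — is the \emph{temporal} quantification of the hypothesis ``${\rm Affinity}(S_i,S_j) < \mathit{maxval}_j$''. The quantities $\mathit{maxval}_j$ and the membership of $(S_i,S_j)$ in $\mathrm{Aff}Q$ are re-evaluated after every merge that touches $S_i$, $S_j$, or their matches, so I should argue that the relevant comparison holds at exactly the times when $(S_i,S_j)$ is actually an entry of the queue. Concretely: $(S_i,S_j)$ is inserted only when $\mathit{match}_i$ is (re)computed and found to equal $S_j$; at that instant $\mathit{maxval}_i = {\rm Affinity}(S_i,S_j)$ by definition of $\mathit{match}_i$, and $\mathit{maxval}_j \ge {\rm Affinity}(S_j,S_i) = {\rm Affinity}(S_i,S_j)$ since ${\rm Affinity}$ is symmetric in its arguments and $S_i$ is a candidate match for $S_j$. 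So the inequality ${\rm Affinity}(S_i,S_j)<\mathit{maxval}_j$ is really saying the max for $S_j$ is achieved strictly at someone other than $S_i$; combined with the invariant, the pair $(S_j,\mathit{match}_j)$ present in $\mathrm{Aff}Q$ strictly dominates $(S_i,S_j)$ in priority for the entire lifetime of $(S_i,S_j)$ in the queue, which is all that is needed. I would close by remarking that this lemma is what justifies the greedy optimization: when computing $\mathit{match}_i$ we may safely skip any candidate $S_j$ with ${\rm Affinity}(S_i,S_j) < \mathit{maxval}_j$, since such a pair would be discarded before use, so no valid merge is ever lost.
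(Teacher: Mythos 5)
Your argument is correct in substance and rests on the same mechanism as the paper's own proof: as long as the strict inequality ${\rm Affinity}(S_i,S_j)<maxval_{j}$ holds, a pair anchored at $S_j$ sits in ${\rm Aff}Q$ with strictly larger key, so $(S_i,S_j)$ can never be the front; and $S_j$ can only disappear through a merge involving $S_j$, whose update step deletes $(S_i,S_j)$ unsolved. The difference is organizational: the paper unfolds this as an explicit case analysis on $match_{k}$ for $S_k=match_{j}$ (Case \textsc{i}: $(S_j,S_k)$ is popped first and its merge removes $(S_i,S_j)$; Case \textsc{ii}: an even larger pair $(S_k,S_p)$ is popped first, $S_j$ is re-matched, and the argument repeats), whereas you compress this chain-chasing into the invariant that every active set keeps its current best pair in ${\rm Aff}Q$ (note this is the invariant of the algorithm \emph{without} the greedy optimization, which is the right setting, since the lemma is what licenses that optimization). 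Your version is more economical, but one step is overstated: the claim that $(S_j,match_{j})$ \emph{strictly} dominates $(S_i,S_j)$ ``for the entire lifetime'' of the entry is not guaranteed, because when the set currently matched to $S_j$ is merged away, $S_j$ is re-matched and the new $maxval_{j}$ is only guaranteed to satisfy $maxval_{j}\ge {\rm Affinity}(S_i,S_j)$, with equality exactly when $match_{j}$ becomes $S_i$ (or ties it). This is precisely the sub-case the paper handles explicitly in Case \textsc{ii}: if $match_{j}$ becomes $S_i$, the pair reappears anchored at $S_j$ with ${\rm Affinity}(S_i,S_j)=maxval_{j}$, the lemma's hypothesis no longer holds at that moment, and the pair may then legitimately be solved. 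So you should replace ``strict domination for the entire lifetime'' by ``strict domination at every instant at which the hypothesis holds, together with deletion of the entry whenever $S_j$ is merged,'' which is exactly what your temporal discussion already sets up; with that adjustment your proof and the paper's coincide in content.
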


\begin{proof}
Since ${\rm Affinity}(S_i,S_j)<maxval_{j}$, we have $match_j=S_k \neq S_i$, and ${\rm Affinity}(S_i,S_j)<{\rm Affinity}(S_j,S_k)$. There exist two cases:

\textbf{Case $\romannumeral1$}: $match_{k}=S_{j}$
Then, in the priority queue ${\rm Aff}Q$, the pair $(S_{j},S_{k})$ will be solved before $(S_{i},S_{j})$ because ${\rm Affinity}(S_i,S_j)<{\rm Affinity}(S_j,S_k)$. Therefore, the pair $(S_{i},S_{j})$ will be removed from ${\rm Aff}Q$ when solving $(S_{j},S_{k})$.

\textbf{Case $\romannumeral2$}: $match_{k}=S_{p}\neq S_{j}$

Then we have ${\rm Affinity}(S_k,S_p)>{\rm Affinity}(S_j,S_k)$ (otherwise $match_{k}=S_{j}$). So the pair $(S_{k},S_{p})$ will be solved before $(S_{j},S_{k})$ in the priority queue ${\rm Aff}Q$. When solving $(S_{k},S_{p})$, $(S_{j},S_{k})$ will be removed, and set $S_{j}$ will find a new $match_{j^{'}}$ in $ST$. If $match_{j^{'}}=S_{i}$, then $(S_{i}, S_{j})$ is re-added into ${\rm Aff}Q$ (at that time, ${\rm Affinity}(S_i,S_j)=maxval_{j}$). Otherwise, $match_{j^{'}}=S_{q}\neq S_{i}$. In that case, the pair $(S_{j},S_{q})$ will be solved before $(S_{i},S_{j})$, and $(S_{i},S_{j})$ will be removed when solving $(S_{j},S_{q})$. Therefore, $(S_{i},S_{j})$ will always be removed and never be solved in any case.
\end{proof}

According to Lemma \ref{l1}, we propose the greedy optimization: for the set $S_{i}$, if $match_{i}=S_{j}$ and ${\rm Affinity}(S_i,S_j)<maxval_{j}$, then the pair $(S_{i},S_{j})$ is rejected and not added into ${\rm Aff}Q$.

The process of Topic Center-based Algorithm is described in Fig. \ref{fig:graph} (b2) and the clustering results are the same with the dialogue-based algorithm in this example showing in Fig. \ref{fig:graph} (c). The full algorithm is shown in Algorithm \ref{topiccenter}.
\begin{algorithm}[htbp]
\caption{Topic center-based algorithm}
\label{topiccenter}
 \renewcommand{\algorithmicrequire}{\textbf{Input}}
\renewcommand{\algorithmicensure}{\textbf{Output}}
\begin{algorithmic}[1]
  \REQUIRE the vectors and timestamp of time-sync comments
  \ENSURE the topic set of each time-sync comment
  \FOR {$i$ = 1 to $N$}
    \STATE $S_{i}.center=vec[i]$
    \STATE $S_{i}.st=t_{i}$
    \STATE $S_{i}.ct=t_{i}$
\ENDFOR
\FOR {$i$ = 1 to $N$}
  \STATE find $S_{j}=match_{i}$ using Eq.(\ref{match})
  \STATE calculate $maxval_{i}$ using Eq.(\ref{dist})
  \IF {($maxval_{j}\leq  maxval_{i}$) and ($maxval_{i}>\rho_{c}$)}
  	\STATE push the pair $(S_{i},match_{i})$ into ${\rm Aff}Q$
  \ENDIF
\ENDFOR
\WHILE{${\rm Aff}Q$ not empty}
	\STATE $(S_{x},S_{y})={\rm Aff}Q.front()$
	\STATE Remove all the pairs $(S_{x},S_{u})$ and $(S_{u},S_{x})$ in ${\rm Aff}Q$
	\IF {$S_{u}\in ST$ and $S_{u}\neq S_{y}$}
		\STATE add $S_{u}$ into $Ulist$
          \ENDIF
          	\STATE Remove all the pairs $(S_{v},S_{y})$ and $(S_{y},S_{v})$ in ${\rm Aff}Q$
	\IF {$S_{v}\in ST$ and $S_{v}\neq S_{x}$}
		\STATE add $S_{v}$ into $Ulist$
          \ENDIF
          \STATE calculate $S_{z}.center$,  $S_{z}.st$, and $S_{z}.ct$ and  using Eq.(\ref{center}), Eq.(\ref{st}), Eq.(\ref{ct})
          \STATE remove $S_{x}$ and $S_{y}$ from $ST$
          \STATE add $S_{z}$ into $ST$ and $Ulist$
          \WHILE {$Ulist$ not empty}
          		\STATE $S_{tmp}=Ulist.front()$
          	          \STATE find $S_{tj}=match_{tmp}$ using Eq.(\ref{match})
          	            \STATE calculate $maxval_{tmp}$ using Eq.(\ref{dist})
  	      	\IF {($maxval_{S_{tj}}\leq  maxvak_{tmp}$) and ($maxval_{tmp}>\rho_{c}$)}
  		 \STATE push the pair $(S_{tmp},match_{tmp})$ into ${\rm Aff}Q$
 		 \ENDIF
          \ENDWHILE
\ENDWHILE
\end{algorithmic}
\end{algorithm}

\subsection{Weight Distribution and Tag Extraction}
\label{3.4}
We partition the topic in Section \ref{3.2} and get the topic of each TSC. In this section, we will attribute weight to each TSC according to the influence of its topic and the relationship in the semantic graph.

The weight of a TSC is affected by its topic popularity, so we define the popularity of the TSC $i$ as:
\begin{equation}
P_{i}=\frac{|v_{i}.S|}{\sqrt[K]{|S_{1}|\cdot|S_{2}|...|S_{K}|}},
\label{eq:6}
\end{equation}
where $S_{j}$ $(j=1,2,...,K)$ is the $j-th$ topic in SAG, and $K$ is the total number of topics in SAG. Obviously those topics with fewer TSCs are more likely to be noises and have less weight. According to Eq.(\ref{eq:6}), noises will have small values of popularity.

Within the topic, a TSC which affects more TSCs and is affected by fewer TSCs should have a higher weight. In order to quantitatively measure the weight of the TSC in a topic, we design a graph iterative algorithm below.

An influence matrix $\mathbb{M}_{N \times N}$ is established at first to express semantic relations within each topic. For the elements in the matrix,
\begin{equation}
m_{i,j}=\begin{cases}
e_{i,j}.w &\text{if $v_{i}.S=v_{j}.S$}\\
0 &\text{if $v_{i}.S \neq v_{j}.S$}\end{cases}
\label{eq:2},
\end{equation}
we use $I_{i,k}$ to denote the influence value of $i-th$ TSC after $k$ iterations. For each TSC $i$, $I_{i,0}=1$ initially. Then in the $k-th$ turn of iteration, there are two steps as follows:
\begin{equation}
 I_{i.2k-1}=I_{i,2k-2}+\sum\limits_{j=i+1}^{n}m_{i,j}\cdot I_{j,2k-1},
\label{eq:3}
\end{equation}
and
\begin{equation}
I_{i,2k}=\frac{I_{i,2k-1}}{I_{i,2k-1}+\sum\limits_{j=1}^{i-1}m_{j,i}\cdot I_{j,2k}}.
\label{eq:4}
\end{equation}

In the $(2k-1)-th$ iteration, we increase the influence value of TSC $i$ based on the values of TSCs that affected by TSC $i$. We know that a TSC only affects the TSCs lagging behind it, so the TSCs are processed from $v_{N}$ down to $v_{1}$. That is, before we process TSC $i$, all the TSCs $j$ that $t_{j}>t_{i}$ have been processed.
In the $(2k)-th$ iteration, we reduce the influence value of TSC $i$ based on the values of the TSCs that affect TSC $i$. Contrary to the $(2k-1)-th$ iteration, we process the TSCs from $v_{1}$ to $v_{N}$ in the $(2k)-th$ iteration. 

The iteration process of SAG in Fig. \ref{fig:graph} (c) is shown in Fig. \ref{fig:iter2}. Fig. \ref{fig:iter2} (a) shows the calculation of the last two nodes (marked as red) that need to be processed in the $(2k-1)-th$ iteration (ignore the noise node $I_2$), where the orange edges express their out-degree edges. While Fig. \ref{fig:iter2} (b) shows the calculation of the last two nodes (marked as red) that need to be processed in the $(2k)-th$ iteration (ignore the noise node $I_{10}$), where the green edges express their in-degree edges.

\begin{figure}[htbp]
\centering
\includegraphics[width=1\linewidth]{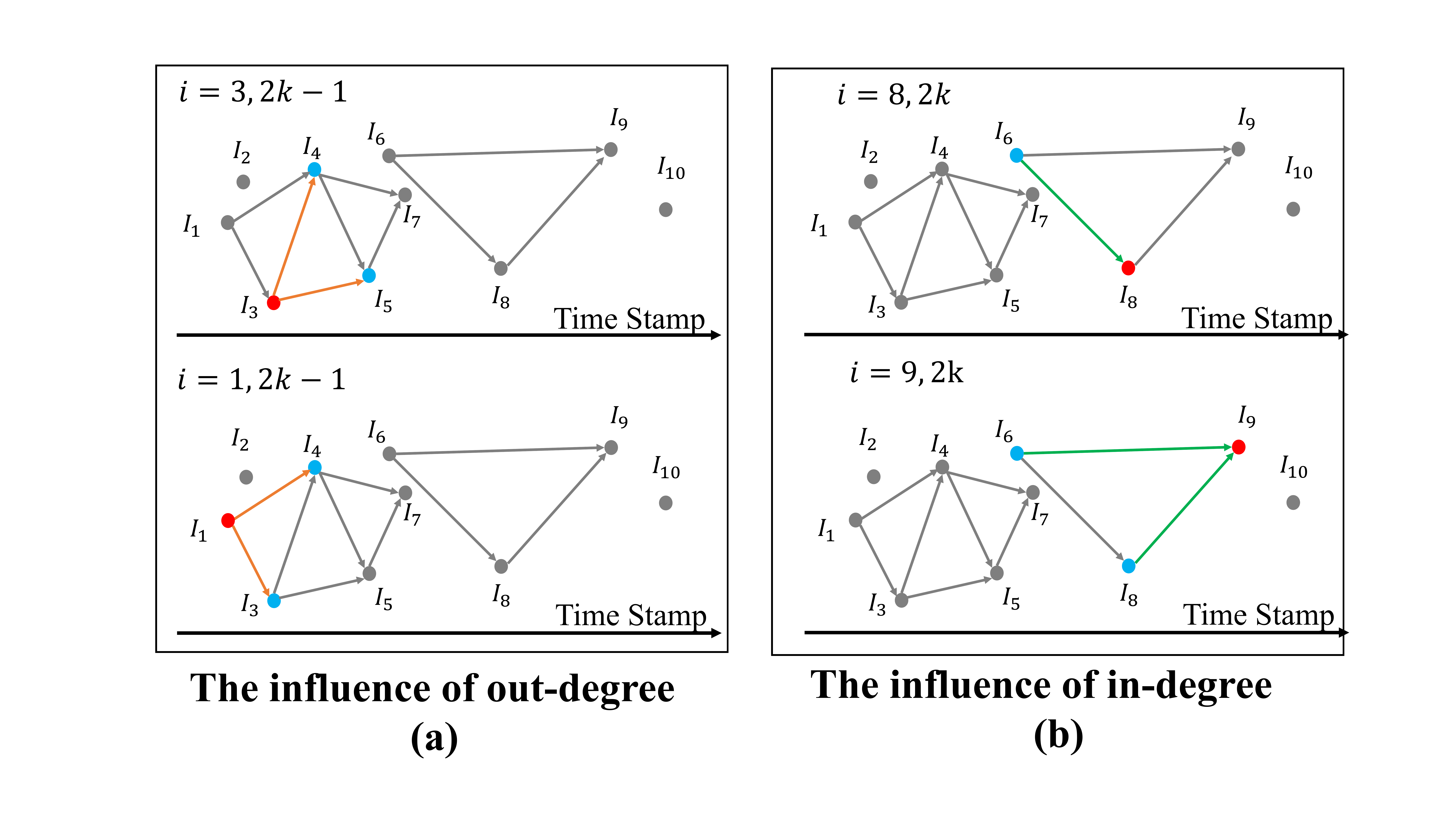}
\caption{The Iteration process of SAG in Fig. \ref{fig:graph} (c).}\label{fig:iter2}
\end{figure}

The converged influence values of the 10 TSCs in Fig. \ref{fig:graph} (c) is shown in Fig. \ref{fig:iter}. After 20 iterations, all TSCs converge to the interval $[0,1]$.

\begin{figure}[htbp]
\centering
\includegraphics[width=0.8\linewidth]{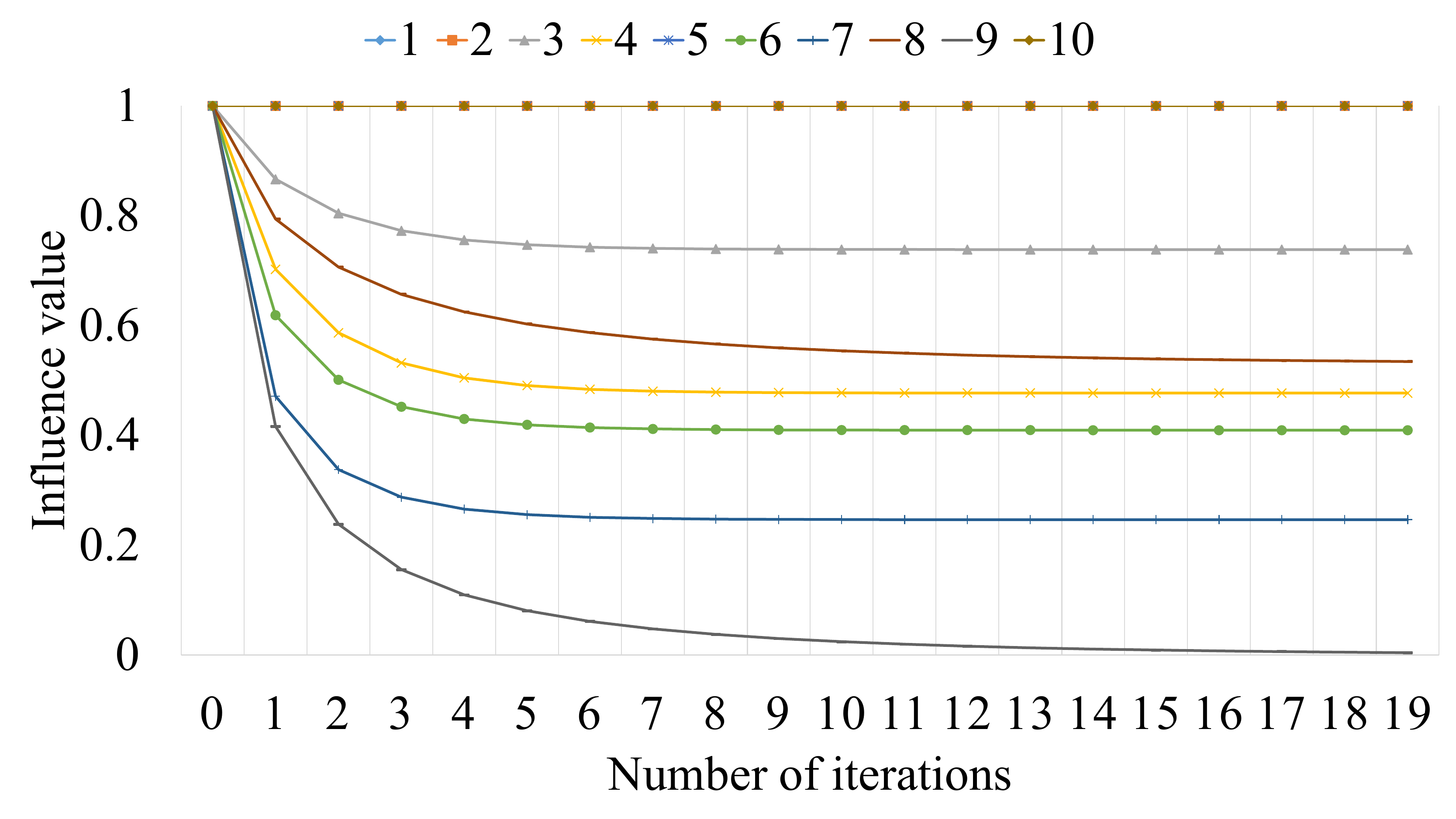}
\caption{The influence value of TSCs in Fig. \ref{fig:graph}}
\label{fig:iter}
\end{figure}

To combine the popularity and the influence value, the weight of TSC $i$ is obtained by
\begin{equation}
W_{i}=P_{i}\cdot I_{i}^{T},
\end{equation}
where $T$ is the number of turns of iterations and depends on the number of nonzero elements in the matrix $\mathbb{M}_{N \times N}$.
Therefore, the weight of each word is formulated as below:
\begin{equation}
SW-IDF_{i}=\sum\limits_{j}W_{j}\cdot IDF_{i},
\label{eq:5}
\end{equation}
where $j$ denotes the TSC that word $i$ appears and $IDF_{i}$ is the inverse document frequency as defined in TF-IDF method. We extract words with the highest SW-IDF value as video tags. After the above steps, those words which appear in the TSCs that are popular and have high impact will be extracted as tags. The complete algorithm is shown in Algorithm \ref{al:2}.
\begin{algorithm}[h]
 \caption{EXTRACTING TAGS BY SW-IDF}
 \label{al:2}
 \renewcommand{\algorithmicrequire}{\textbf{Input}}
\renewcommand{\algorithmicensure}{\textbf{Output}}
 \begin{algorithmic}[1]
  \REQUIRE Semantic Association Graph
  \ENSURE Tags of video
 \STATE Assign time-sync comments to a set by Algorithm \ref{dialougue} or Algorithm \ref{topiccenter}
  \STATE Calculate the influence matrix $\mathbb{M}_{N \times N}$ using Eq.(\ref{eq:2})
  \FOR {$i$ = 1 to $N$}
    \STATE $I_{i}^{0}=1$
    \STATE Calculate the popularity of TSC $i$ using Eq.(\ref{eq:6})
  \ENDFOR
  \FOR {$k$ = 1 to $T$}
    \FOR {$i$ = N downto $1$}
    \STATE Calculate $I_{i}^{2k-1}$ using Eq.(\ref{eq:3})
    \ENDFOR
    \FOR {$i$ = 1 to $N$}
    \STATE Calculate $I_{i}^{2k}$ using Eq.(\ref{eq:4})
    \ENDFOR
  \ENDFOR
  \STATE Calculate the SW-IDF of each word using Eq.(\ref{eq:5})
  \STATE Select words with max SW-IDF as video tags
 \end{algorithmic}
\end{algorithm}

\subsection{Complexity Analysis}
\label{ca}
In this section, we analyze the time complexity and the space complexity of each algorithm.

In Algorithm \ref{dialougue}, the time complexity of the edge sorting algorithm in line 1 is $O (MlogM)$ by using quicksort, and the space complexity is $O(M)$. The amortized time complexity of merging sets by disjoint-set is $O(\alpha(n))$ \cite{tarjan1979class} and the space complexity is $O(N)$, where $\alpha(n)$ is the inverse Ackermann function that $\alpha(n)<5$. So the total time complexity of Algorithm \ref{dialougue} is $O(MlogM+M\alpha(N))$, and the space complexity is $O(M + N)$.

In Algorithm \ref{topiccenter}, the time complexity of initialization from line 1 to line 12 is $O(N^{2})$, and the space complexity is $O(N)$. In ${\rm Aff}Q$, the number of times of merge-operation is limited to $N-1$ (because there are at most $N $sets), and the amortized removal operation is limited to 1 each merge-operation. For each merge operation, the lookup operation and remove operation can be dealt in $O(N)$ by naive algorithm, or $O(logN)$ by binary balance tree \cite{bentley1975multidimensional}. The worst complexity of total Algorithm \ref{topiccenter} is $O(N^{2})$. The total  space complexity is just $O(N)$.

In Algorithm \ref{al:2}, the time complexity is $O(T\cdot N^{2})$ and the space complexity is $O(M+N^2)$ apparently. In our SAG, $M<N^{2}$ because two TSCs with a negative semantic similarity do not have an edge. Therefore, $O(MlogM+M\alpha(N)) < N^{2})$ in the true TSC data, and the dialogue-base algorithm has a more efficient time complexity than the topic center-based algorithm.

\section{Experimental Study}
\label{sec:4}
In this section, we verify the effectiveness of our proposed method by comparing with four unsupervised methods of keyword extraction. The datasets are crawled from AcFun (www.acfun.cn) and Bilibili. We provide the necessary parameters in our algorithms in Section \ref{exset} and then analyze the performance of our algorithms on video tag extraction in Section \ref{results}.
\subsection{Experimental Setup and Datasets}
\label{exset}
We crawl TSCs from two famous Chinese time-sync comments video websites AcFun and Bilibili. The raw TSC texts are full of noises, so we manually remove non-textual TSCs (such as emojis) and establish a set of mapping rules for network slang, which will be substituted by their real meaning in the text. For instance, 233... (2 followed by several 3) means laughter, 666... (several 6) means playing games very well. After that, we segment the words and remove the anomaly symbol (the symbolic expression, such as a smiley face (\^{}\_\^{}) ) in TSCs by an open-source Chinese-language processing toolbox Jieba \footnote{https://github.com/fxsjy/jieba}. To analyze the algorithms from different aspects, we collected two datasets. To be specific, in the first dataset (called it D1), totally 287 videos with 227,780 comments are collected randomly from music, sports, and movie. To set the hyper-parameters in this paper, we select 167 videos with 126,146 TSCs for the validation set and 120 videos with 101,634 comments for the test set. In the second dataset (called it D2), totally 180 videos with 569,996 comments are collected from Japanese anime. We use D1 to compare our algorithms with baselines, and use D2 to accurately analyze the effects of the two algorithms we proposed at different densities. 

We define the density of TSCs as the average number of TSCs per minute. In D2, we divide the density into 5 levels: 0-30, 30-60, 60-90, 90-120 and more than 120 (the intervals are left-closed and right-open). More details include the length of the video, total number of TSCs, density and the number of videos about test set are shown in Table \ref{table1} for D1 and Table \ref{density} for D2.

\begin{table}[htbp]
\caption{Data Description Table for D1}
\centering
\begin{tabular}{|c|c|c|}
\hline
&Validation set& Test set\\
\hline
Total length (minute) & 1,573.29 & 1,441.38\\
\hline
Total TSCs number& 126,146 & 101,634\\
\hline
Density & 80.18 & 70.51\\
\hline
Total video number & 167 & 120\\
\hline
\end{tabular}
\label{table1}
\end{table}

\begin{table}[htbp]
\caption{Data Description Table for D2}
\centering
\begin{tabular}{|c|c|c|c|c|c|}
\hline
&0-30 & 30-60& 60-90 & 90-120 & >120\\
\hline
Total length (minute) & 644.37 & 433.01 &855.40 & 883.61 & 1,221.55\\
\hline
Total TSCs number& 11,489 & 19,368 & 60,152 & 99,671 & 379,316\\
\hline
Density & 17.83 & 43.72 & 70.32 & 112.80 & 310.52\\
\hline
Total video number & 29 & 21 & 37 & 42 & 51\\
\hline
\end{tabular}
\label{density}
\end{table}

We select two undergraduate students and one Ph.D. student as volunteers. For each video, each volunteer chooses 15 words from TSCs and votes them as video tags. The words with two or more votes are selected as the standard tags.  Therefore, the number of standard tags per video is different. Moreover, the order of these tags is determined by the number of votes at first. TSCs with more votes rank in front. When the number of votes is the same, the order is determined by the Ph.D. student. \footnote{The code of our algorithm is uploaded  to https://github.com/sdq11111/SAG.}

In Section \ref{graph}, we use the word2vec method get the embedding vectors of TSCs. In this paper, we choose the skip-gram model of word2vec to pre-train the word embedding vectors and the training algorithm is hierarchical softmax, because both skip-gram model and hierarchical softmax algorithm are better for infrequent words \cite{mikolov2013distributed}, which is more relevant to the features of the TSCs. We use gensim \footnote{https://radimrehurek.com/gensim/models/word2vec.html} to train the model, and the training data is crawled from Bilibili with the TSCs of 6,743,912 words. Since we have sufficient training corpus, the dimension $d$ of word2vec is set to 300 as \cite{li2018analogical}.

To further prove the rationality of using the word2vec to calculate the similarities of the TSCs, we use several traditional unsupervised learning and other word embedding methods to calculate the semantic similarities, \emph{i.e.}
\begin{enumerate}[(1)]
\item LDA, a famous topic model based method, Latent Dirichlet Allocation \cite{blei2003latent}.
\item PPMI, a co-occurrence probability based distributional model, Positive Pointwise Mutual Information \cite{levy2014linguistic}
\item HowNet, a HowNet hierarchical sememe tree based approach \cite{wu2012chinese},  where HowNet \cite{dong2003hownet} is a common-sense knowledge base unveiling inter-conceptual relations and inter-attribute relations of concepts. 
\item GLoVe, a famous word embedding method, Global Vectors for Word Representation \cite{pennington2014glove}.
\end{enumerate}

We test the top 10 tag extraction results using the above methods to calculate the similarity and build the graph on the verification set (the hyper-parameters used in the experiment are discussed later). In this paper, we use F1-score and MAP (Mean Average Precision, which is the mean of the average precision scores for each query \cite{zhu2004recall}) to measure the performance of tag extraction. The results are shown in Table \ref{embedding}.

\begin{table}[htbp]
\centering
\caption{The effect of semantic similarity calculation method on the results}
\begin{tabular}{|c|c|c|c|c|}
\hline
Method & F1 (dialogue) & MAP (dialogue) & F1 (topic center) & MAP (topic center)\\
\hline
LDA & 0.3625 & 0.3372 & 0.3641 & 0.3224\\
\hline
PPMI & 0.3919 & 0.3705 & 0.4101 & 0.3806\\
\hline
HowNet& 0.3537 & 0.3423 & 0.3468 & 0.3194\\
\hline
GLoVe & 0.4045 & 0.4012 & 0.4202 & 0.4079\\
\hline
Word2Vec & 0.4183 & 0.4041 & 0.4342 & 0.4160\\
\hline
\end{tabular}
\label{embedding}
\end{table}

The experimental results show that, in the verification set, Hownet performs the worst among the baselines because of the limited number of word lists. LDA also performs poorly because it is not good at handling short texts. Among the word embedding based methods PPMI, GLoVe, and word2vec, word2vec performs best, which indicates that the fully trained word2vec method has better robustness and is more suitable for calculating the similarity of the TSCs.

What is more, in our algorithm, three parameters need to be determined, \emph{i.e.}, the threshold of intra-cluster density $\rho_{d}$ and $\rho_{c}$, and the attenuation coefficient $\gamma_{t}$. The $\rho_{d}$ and $\rho_{c}$ control the accuracy of topic clustering. The $\gamma_{t}$ is the attenuation coefficient of the interval between time-sync comments, which controls the value of the edge weights in the graph. 

We first fix $\gamma_{t}$ and adjust the values of $\rho_{d}$ and $\rho_{c}$ so that the F1-score and MAP in the verification set are optimal. Then, we select the optimal $\rho_{d}$ and $\rho_{c}$ and re-adjust $\gamma_{t}$ so that the F1-score and MAP in the verification set is optimal. In Bilibili video site, the default time for each TSC to appear on the screen is 10 seconds. Therefore, we assume that the semantic half-life of each TSC is 5 seconds, and calculate the initial $\gamma_{t}=-ln0.5/5\approx0.14$ according to Eq. (\ref{decay}).

\begin{figure}[htbp]
\centering
\begin{minipage}[t]{0.49\linewidth}
\centering
\includegraphics[width=1\linewidth]{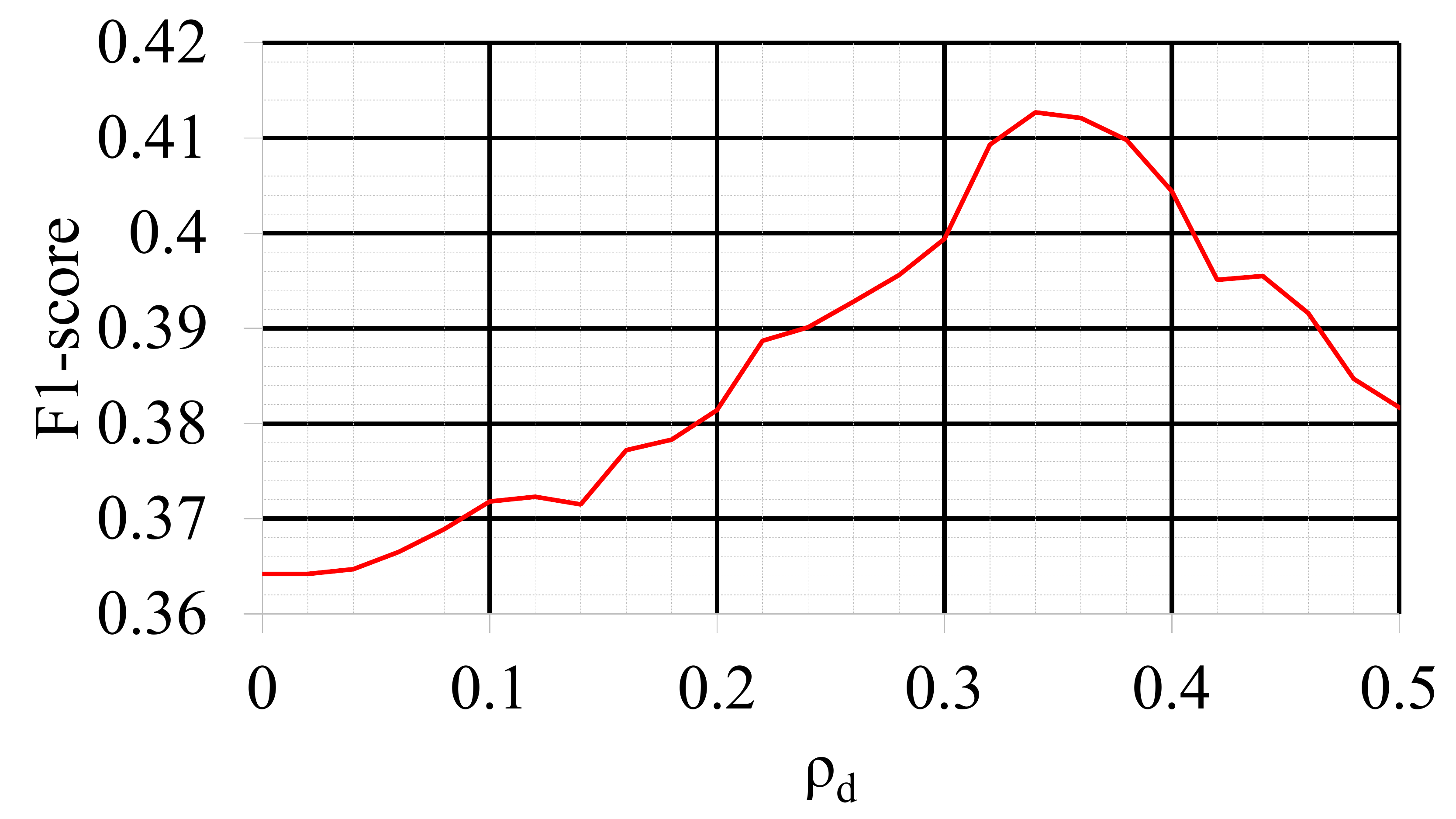}
\caption{The effect of threshold $\rho_{d}$ in F1-score}
\label{fig:diaf1}
\end{minipage}
\begin{minipage}[t]{0.49\linewidth}
\centering
\includegraphics[width=1\linewidth]{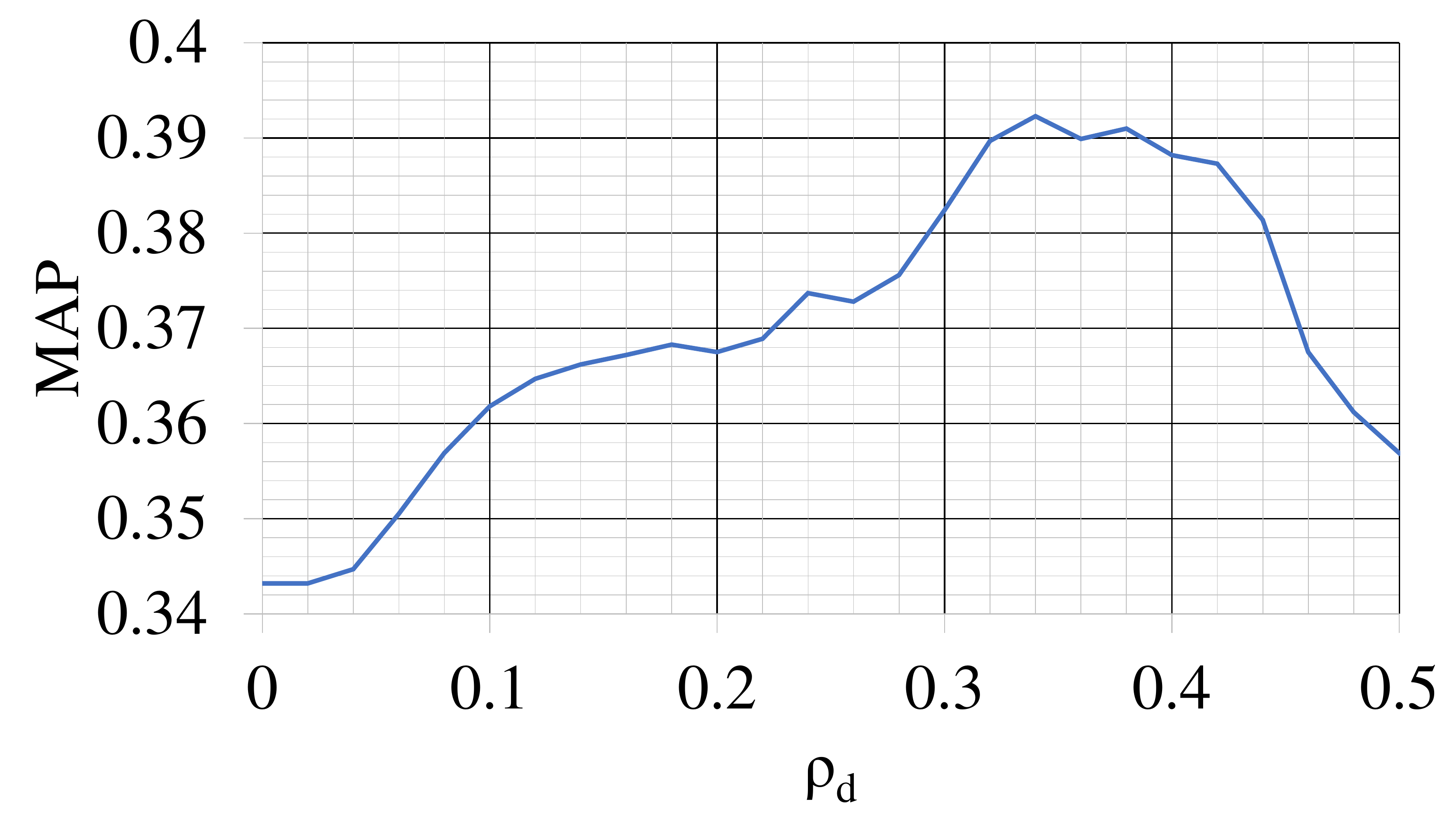}
\caption{The effect of threshold $\rho_{d}$ in MAP}
\label{fig:diamap}
\end{minipage}
\end{figure}

\begin{figure}[htbp]
\centering
\begin{minipage}[t]{0.49\linewidth}
\centering
\includegraphics[width=1\linewidth]{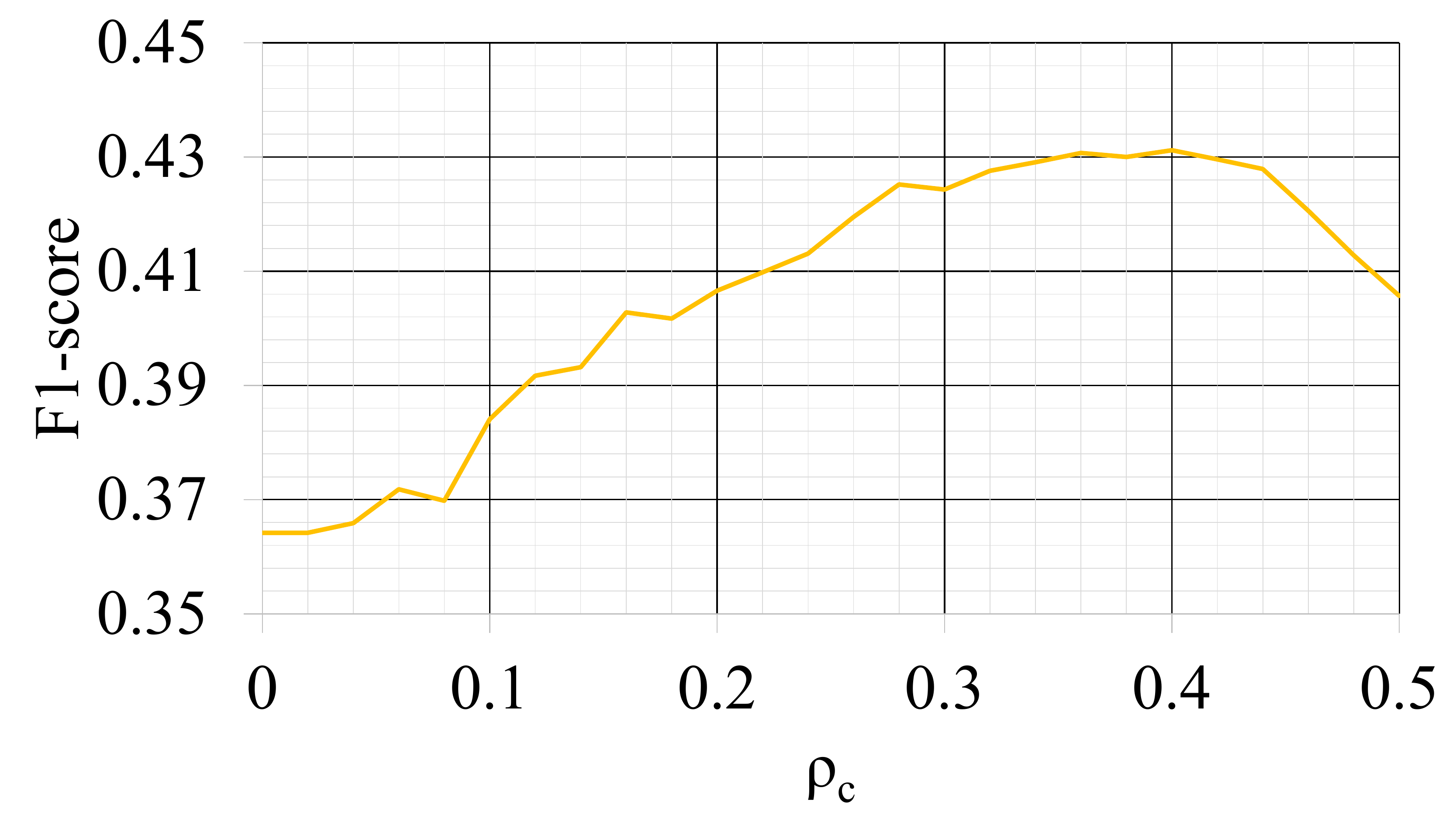}
\caption{The effect of threshold $\rho_{c}$ in F1-score}
\label{fig:topf1}
\end{minipage}
\begin{minipage}[t]{0.49\linewidth}
\centering
\includegraphics[width=1\linewidth]{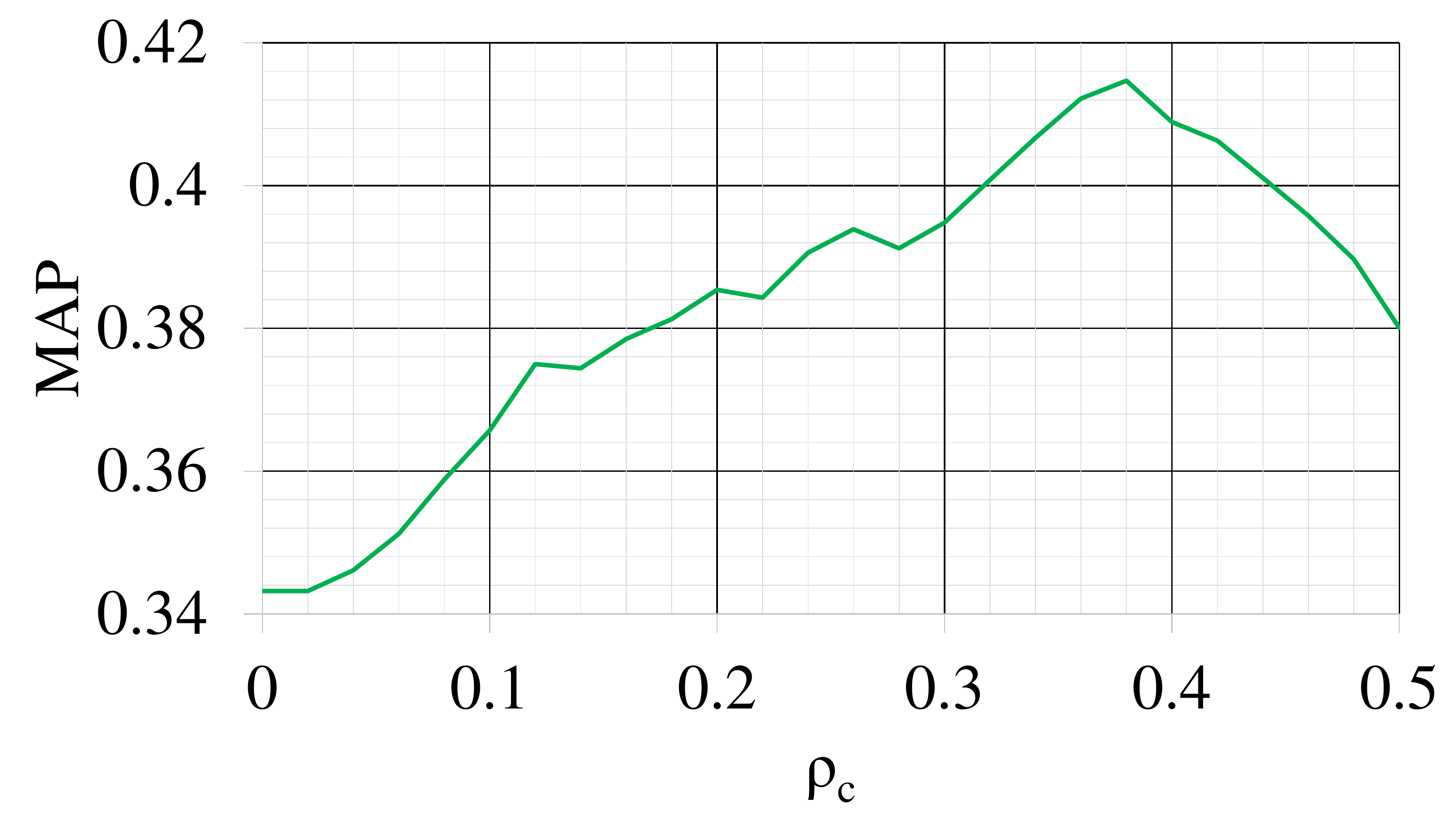}
\caption{The effect of threshold $\rho_{c}$ in MAP}
\label{fig:topmap}
\end{minipage}
\end{figure}

To determine $\rho_{d}$, we fix $\gamma_{t}=0.14$, adjusting $\rho_{d}$ from 0 to 0.5 in 0.02 steps and observe the F-score and MAP of Top 10 tagging results generated by the dialogue-based algorithm. The results of F1-score and MAP in the validation set are shown in Fig.\ref{fig:diaf1} and Fig.\ref{fig:diamap}, respectively. Both in F1-score and MAP, $\rho_{d}$ gains better results in the range of 0.32 to 0.38 and get optimal performance at 0.34. Therefore, we choose $\rho_d = 0.34$ for the following experiments.

To determine $\rho_{c}$, we also fix $\gamma_{t}=0.14$, adjusting $\rho_{c}$ from 0 to 0.5 in 0.02 steps and observe the F-score and MAP of Top 10 tagging results generated by the topic center-based algorithm. The results of F1-score and MAP in the validation set are shown in Fig.\ref{fig:topf1} and Fig.\ref{fig:topmap}, respectively. For F1-score, $\rho_c$ gains better results in the range of 0.34 to 0.42 and get optimal performance at 0.40. For MAP, $\rho_c$ gains better results in the range of 0.34 to 0.40 and get the optimal performance at 0.38. Considering both F1-score and MAP, we choose $\rho_c = 0.38$ for the following experiments.

\begin{figure}[htbp]
\centering
\begin{minipage}[t]{0.49\linewidth}
\centering
\includegraphics[width=1\linewidth]{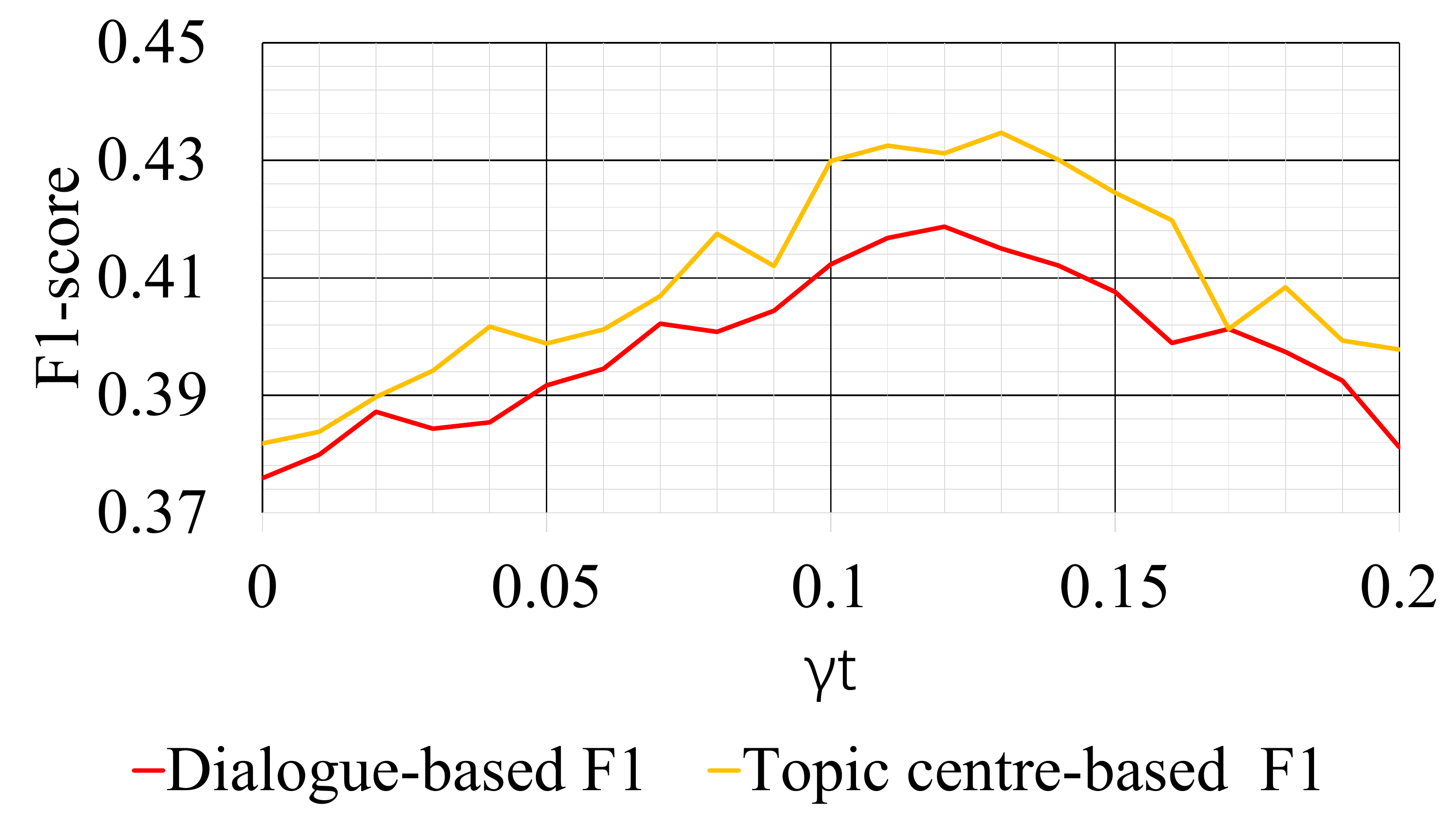}
\caption{The effect of attenuation coefficient $\gamma_{t}$ on F1-score}
\label{fig:timef}
\end{minipage}
\begin{minipage}[t]{0.49\linewidth}
\centering
\includegraphics[width=1\linewidth]{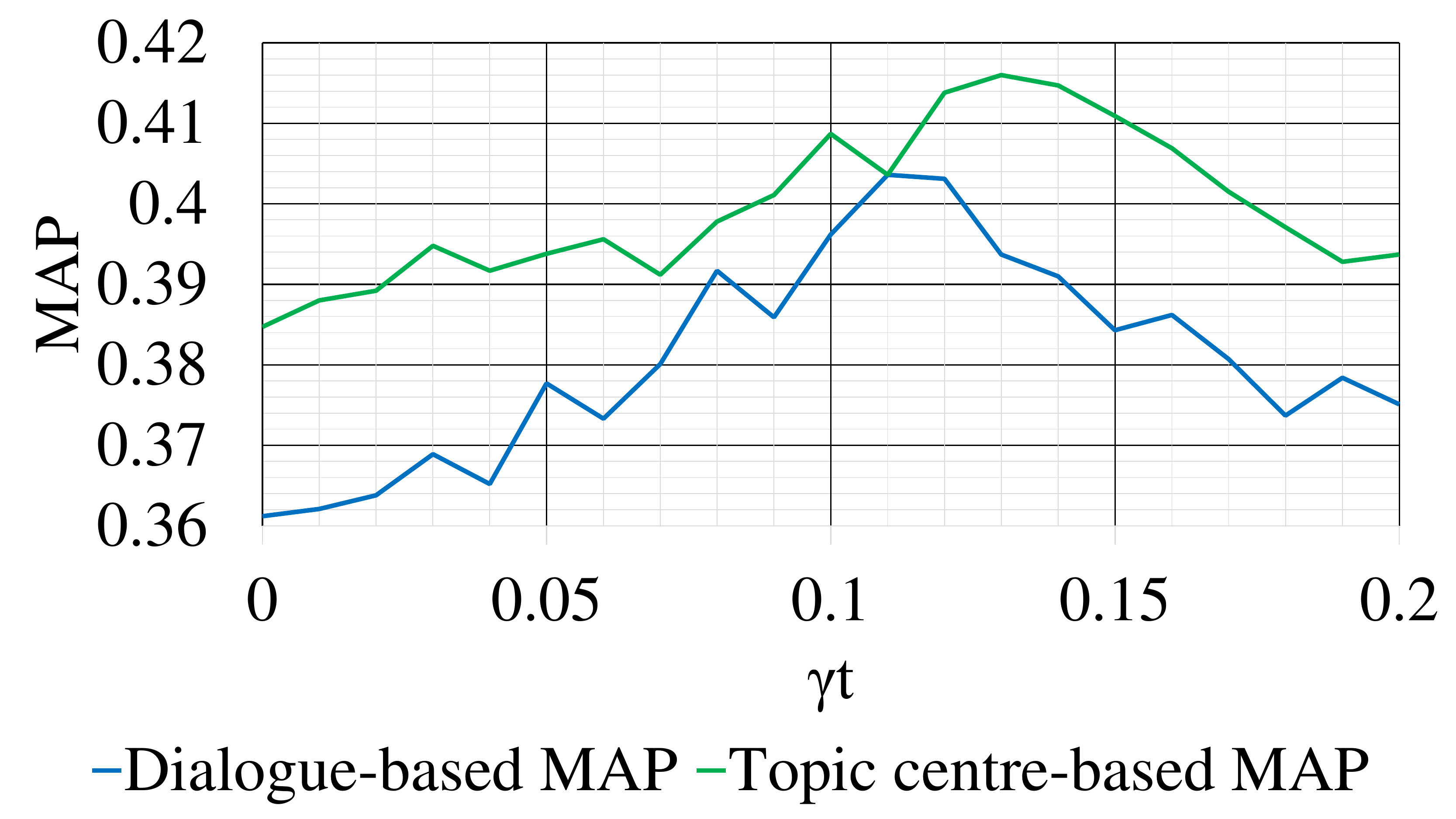}
\caption{The effect of attenuation coefficient $\gamma_{t}$ on MAP}
\label{fig:timem}
\end{minipage}
\end{figure}

With the optimal $\rho_{d}$ and $\rho_{c}$ obtained before, we re-adjust $\gamma_{t}$ from 0 to 0.2 in steps 0.01, and observe the F-score and MAP of video tags generated by our algorithms. The results of F1-score and MAP in the validation set are shown in the Fig. \ref{fig:timef} and Fig. \ref{fig:timem}. For the dialogue-based algorithm, $\gamma_{t}$ gains better performance in the range of 0.10 to 0.13 and gets optimal performance at 0.12 for F1-score and 0.11 for MAP. For topic the center-based algorithm, $\gamma_{t}$ gains better performance in the range of 0.10 tp 0.14 and gets optimal performance at  0.13 for both F1-score and MAP. To take comprehensive consideration of both F1-score and MAP, we choose $\gamma_{t}=0.12$ for the dialogue-based algorithm, and $\gamma_{t}=0.13$ for the topic center-based algorithm in the following experiments. In fact, when $\gamma_{t}=0$, the semantic association graph is independent of time; when $\gamma_{t}=+ \infty$, all weights of edge equal to 0, and our model is equivalent to TF-IDF.

Besides, the number of iterations $T$ also needs to be determined. We count the number of iterations when algorithms converge at different densities (we consider the algorithm converges when the average of $\frac{|I_{i,k}-I_{i-1,k}|}{I_{i-1,k}}<5\%$), the results are shown in Table \ref{T}.

\begin{table}[htbp]
\centering
\caption{The number of iterations when algorithms converged at different densities}
\begin{tabular}{|c|c|c|c|c|c|}
\hline
 &0-30 &31-60 & 60-90 & 90-120 & $>$120\\
\hline
Dialogue &7.32 & 13.59 & 27.59 & 35.15 & 43.82\\
\hline
Topic center & 6.89& 14.92 & 23.15 & 31.42 & 45.62\\
\hline
\end{tabular}
\label{T}
\end{table}

As shown in Table \ref{T}, when the density of TSCs is low, the SAG generated by two algorithms is sparse, and therefore the number of iterations is few. As the density increases, the SAG becomes dense and the number of iterations increases. To simplify, we choose $T=50$ in the experiment.

\subsection{Results}
\label{results}
In this section, we first use D2 to analyze the clustering effect of the two algorithms we proposed at different densities. Then, we use the test set of D1 to verify the effectiveness of the greedy optimization we proposed, and compare our algorithms with the existing methods TF-IDF, TextRank \cite{mihalcea2004textrank}, BTM \cite{yan2013biterm} GSDPMM \cite{yin2014dirichlet,yin2016model}, and TPTM \cite{wu2014crowdsourced}.

In the beginning, an experiment was designed to compare the clustering effect of the two algorithms. Given a set of topics $ST=\{S_{1},S_{2},...,S_{K} \}$, two distance scores are introduced \cite{yan2013biterm}.

\textbf{Average Intra-Cluster Distance:}
\begin{equation}
IntraDis(S)=\frac{1}{K}\sum_{k=1}^{K}\left [ \sum_{ \substack{v_{i},v_{j}\in S_{k} \\i \neq j}}
\frac{2\cdot {\rm Affinity}(v_{i},v_{j})}{|S_{k}| |S_{k}-1|}\right ]
\end{equation}

\textbf{Average Inter-Cluster Distance:}
\begin{equation}
InterDis(S)=\frac{1}{K(K-1)} \sum_{ \substack{S_{k},S_{k'}\in ST \\ k\neq k'}}\left[\sum_{v_{i}\in S_{k}}\sum_{v_{j}\in S_{k'}}\frac{{\rm Affinity}(v_{i},v_{j})}{|S_{k}||S_{k'}|}  \right]
\end{equation}

Since we use ${\rm Affinity}$ function to calculate the semantic similarity between two topics, where the higher the similarity is, the greater the function value is. Intuitively, if the Average Intra-Cluster Distance is high and the Average Inter-Cluster Distance is low, then the algorithm has a great clustering effect. Therefore, we calculate  
\begin{equation}
H=\frac{IntraDis(ST)}{InterDis(ST)}
\end{equation}
to evaluate the quality of clustering algorithms as \cite{guo2011intent,bordino2010query}. 

Due to the time decay function in the semantic association graph, the $H$ value, the IntraDis and the topic number (cluster number) of the videos vary greatly with the video duration. Therefore, we do not calculate the average value of all the videos directly but define an $H-hit$ score instead. That is, for each video, we compare the $H$ score obtained by the two cluster algorithms, and the algorithm with the larger $H$ score obtains a hit. The H-hit that the dialogue-based algorithm gets is called D-Hit, and the H-hit that the topic center-based algorithm obtains is called T-Hit. 
\begin{figure}[htbp]
\centering
\includegraphics[width=0.7\linewidth]{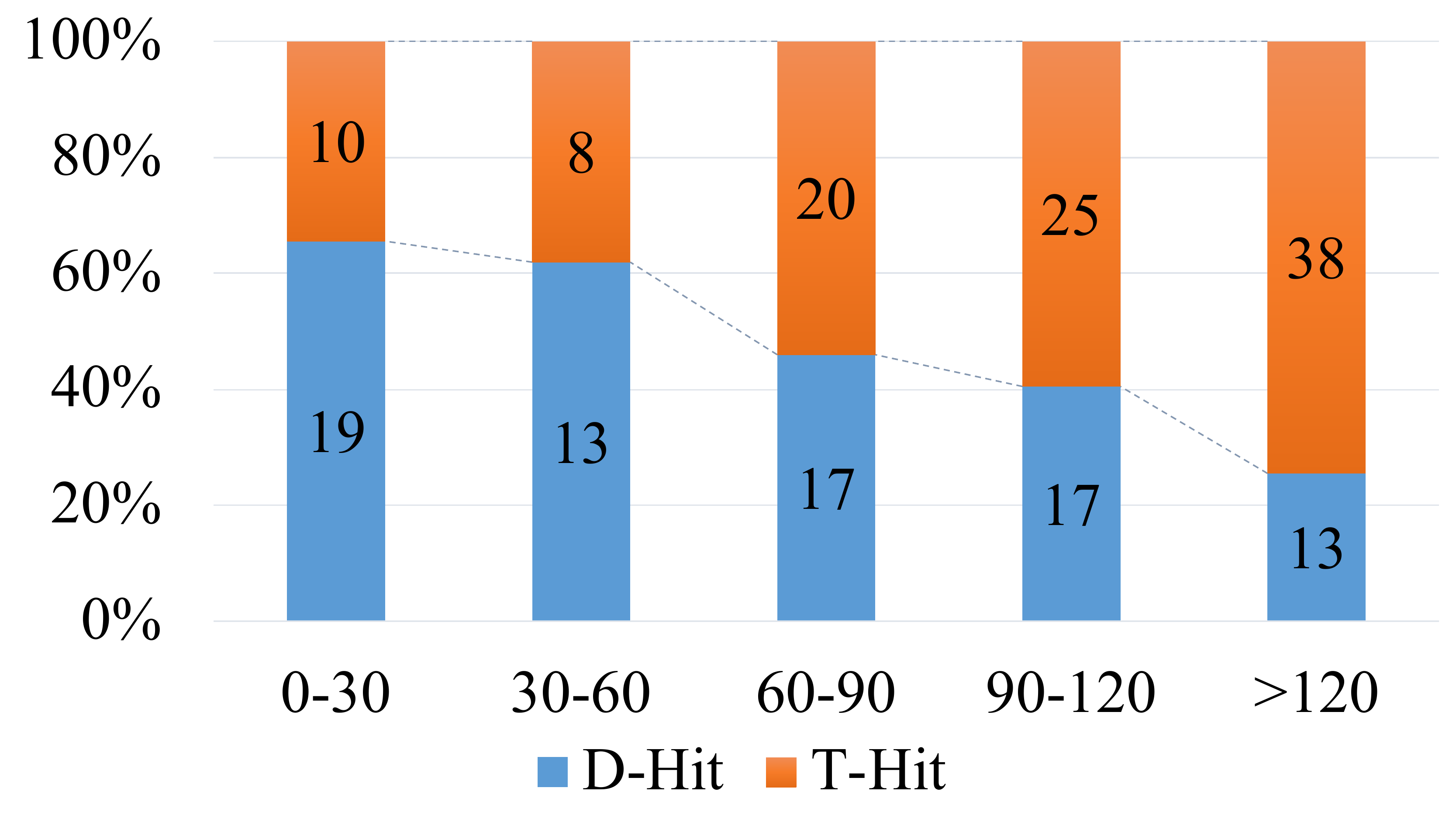}
\caption{The comparison of two clustering algorithms}
\label{fig:hit}
\end{figure}

The results are shown in Fig. \ref{fig:hit}. The dialogue-based algorithm performs better when the density is lower than 60. As the density increases and exceeds 60, the topic center-based algorithm performs better than the dialogue-based model. Moreover, we directly compare the top 10 tag extraction results of two clustering algorithms at different densities. The results are shown in Table \ref{self-density}.

\begin{table}[htbp]
\centering
\caption{The tag extraction results at different densities.}
\begin{tabular}{|c|c|c|c|c|c|}
\hline
 &0-30 &31-60 & 60-90 & 90-120 & $>$120\\
\hline
Dialogue F1-score&0.4357 & 0.4412 & 0.4219 & 0.4108 & 0.4383\\
\hline
Dialogue MAP& 0.3742 & 0.4027 &0.4615 & 0.4013 & 0.4872\\
\hline
Topic center F1-score& 0.4139& 0.4276 & 0.4275 & 0.4216 & 0.4433\\
\hline
Topic center MAP& 0.3615 & 0.3988& 0.4747 & 0.4077 & 0.5093\\
\hline
\end{tabular}
\label{self-density}
\end{table}

The tag extraction results are similar to Fig. \ref{fig:hit}. From Fig. \ref{fig:hit} and Table \ref{self-density}, we can conclude that the dialogue-based algorithm is better for videos with a density lower than 60, while topic center-based algorithm has significant advantages for videos with the density higher than 60, which fits our assumptions in Section \ref{3.2}. Based on the conclusions above, in the test set of D1, we consider the videos with the density of TSCs greater than 60 as high-density videos, and others are low-density videos. Then, the test set in D1 is divided into two parts: videos with high-density TSCs and with low-density TSCs. The details are shown in Table \ref{d1}.

\begin{table}[htbp]
\caption{Data Description Table for the test set of D1}
\centering
\begin{tabular}{|c|c|c|}
\hline
&High-density & Low-density\\
\hline
Total length (minute) & 124.58 & 1316.80\\
\hline
Total TSCs number& 41,556 & 60,078\\
\hline
Density & 333.56 & 45.62\\
\hline
Total video number & 89 & 31\\
\hline
\end{tabular}
\label{d1}
\end{table}

We use the data in Table \ref{d1} to verify the effectiveness of greedy optimization we proposed in Section \ref{tc}. Specifically, we run the code of Algorithm \ref{topiccenter} for 10 times, counting the running time from line 6 to line 34, with and without the greedy optimization (in line 9), respectively. The experiment platform we used is one MacBook Pro 13-inch, 2.9 Ghz Inter Core i5, 8GB 2133MHz LPDDR3 with single thread. We add up the total time of all the samples (since the single sample only runs for a short time). The average time of 10 runs is shown in Table \ref{greedy}.

\begin{table}[htbp]
\caption{Validation of greedy optimization}
\centering
\begin{tabular}{|c|c|c|}
\hline
&High-density & Low-density\\
\hline
Topic center only& 7.671 & 10.725\\
\hline
Topic center with greed& 6.905 & 10.060\\
\hline
\end{tabular}
\label{greedy}
\end{table}

The results show that the greedy optimization reduces 9.99\% running time of high-density data and 6.20\% of low-density data, respectively, which verifies the effectiveness of our greedy algorithm.

Then, we compare our algorithm with different existing methods using the test set of D1. To evaluate the performance of the proposed video tag extraction algorithm, we compare our method with 5 unsupervised keyword extraction methods, \emph{i.e.},
\begin{enumerate}[(1)]
\item TF-IDF, a classical keyword extraction algorithm.
\item TX, a graph-based text ranking model, textrank \cite{mihalcea2004textrank}, which is inspired by PageRank.
\item BTM, a topic model based algorithm, Biterm Topic Model \cite{yan2013biterm}, which is the improvement of LDA \cite{blei2003latent} for short texts. The number of topics is 20 in this experiment.
\item GSDPMM, a collapsed Gibbs Sampling algorithm for the Dirichlet Process Multinomial Mixture model \cite{yin2014dirichlet,yin2016model}, which has good performance when dealing with short texts. We set $\alpha = 0.1 * D$ ($D$ is the number of documents in the dataset), $K = 1$, and $\beta = 0.02$ in this experiment.
\item TPTM, a Temporal and Personalized Topic Model \cite{wu2014crowdsourced}, which is the first work on automatic TSC tagging. All parameters are set in accordance with \cite{wu2014crowdsourced}.
\end{enumerate}

\begin{table}[htbp]
\centering
\caption{Comparison of different methods on video tag extraction of the top 10 candidate tags with high-density TSCs.}
\begin{tabular}{|c|c|c|c|c|}
\hline
Method & Prec & Recall & F1-score & MAP\\
\hline
TF-IDF & 0.2674 & 0.5735 & 0.3648 & 0.4224\\
\hline
TX & 0.2427 & 0.5205 & 0.3310 & 0.3696\\
\hline
BTM & 0.2337 & 0.5012	& 0.3188 & 0.3094\\
\hline
GSDPMM & 0.2445 & 0.5094 & 0.3302 & 0.3374\\
\hline
TPTM & 0.2539 & 0.5446 & 0.3463 & 0.3824\\
\hline
SW-IDF (dialogue) & 0.3079 & 0.6602& 0.4210 & 0.4932\\
\hline
SW-IDF (topic center) & 0.3258 & 0.6988 & 0.4444 & 0.5122\\
\hline
\end{tabular}

\label{table:2}
\end{table}

\begin{table}[htbp]
\centering
\caption{Comparison of different methods on video tag extraction of the top 10 candidate tags with low-density TSCs.}
\begin{tabular}{|c|c|c|c|c|}
\hline
Method & Prec & Recall & F1-score & MAP\\
\hline
TF-IDF & 0.3411 & 0.4028 & 0.3694 & 0.3098\\
\hline
TX & 0.3224 & 0.3709 & 0.3450 & 0.3147\\
\hline
BTM & 0.3210 & 0.3662	& 0.3369 & 0.2927\\
\hline
GSDPMM & 0.3440 & 0.4038 & 0.3715 & 0.3202\\
\hline
TPTM & 0.3677 & 0.4334 &  0.3979 & 0.3359 \\
\hline
SW-IDF (dialogue) & 0.3912 & 0.4693 & 0.4267 &0.3623\\
\hline
SW-IDF (topic center)&0.3877 &0.4562&0.4207&0.3522\\
\hline
\end{tabular}
\label{table:3}
\end{table}

For each method, we calculate the precision, recall, MAP (Mean Average Precision) and F1-score of top 10 tagging results at first. Results of high-density and low-density of TSCs are shown in Table \ref{table:2} and Table \ref{table:3}, respectively.

In high-density condition, our topic center-based SW-IDF algorithm achieves optimal results in both F1-score and MAP. It increases the F1-score by 21.82\% and the MAP by 21.26\% compared with the state-of-the-art method TF-IDF in the baselines. In low-density condition, our dialogue-based SW-IDF algorithm achieves optimal results in both F1-score and MAP. It increases the F1-score by 7.24\% and the MAP by 7.86\% compared with the state-of-the-art method TPTM in the baselines. Compare the two algorithms, we find that the dialogue-based algorithm performs better in low-density condition, while topic center-based algorithm performs better in high-density condition, which further proves our assumption in Section \ref{3.2}. 

What is more, when the density of TSCs becomes high, the noises increase. Therefore the result of topic model based methods, BTM, GSDPMM, and TPTM are poor and even worse than classical method TF-IDF. However, TF-IDF only counts the number of words and does not consider the semantic relationship of TSCs, so the result is not as good as our algorithms. Relatively, in low-density comments, the graph is sparse and noises reduce. That is why our algorithms achieve greater improvement in high-density than in low-density.

\begin{table}[h]
\centering
\caption{Comparison of different methods on video tag extraction of the top 5 and top 15 candidate tags}
\begin{tabular}{|c|c|c|c|c|}
\hline
Method & H-Top 5 & H-Top 15 & L-Top 5 & L-top 15\\
   &Prec Recall & Prec Recall &Prec Recall & Prec Recall\\
\hline
TF-IDF & 0.4182 0.4483 & 0.1871 0.5997 & 0.4140 0.2434 & 0.2993 0.5255\\
\hline
TX & 0.3012 0.3234 & 0.1810 0.5831 & 0.3838 0.2250 & 0.2814 0.5071\\
\hline
BTM & 0.2715 0.2924 & 0.1771 0.5692 & 0.3678 0.2158 & 0.2609 0.4602\\
\hline
GSDPMM & 0.2812 0.3013 & 0.1832 0.5930 & 0.4181 0.2486 & 0.3067 0.5390\\
\hline
TPTM & 0.3627 0.3945 & 0.1805 0.5927& 0.4365 0.2662 & 0.3183 0.5624\\
\hline
SW-IDF(d) & 0.4935 0.5362 & 0.2273 0.7241& 0.4654 0.2893 & 0.3556 0.6327\\
\hline
SW-IDF(c) & 0.5300 0.5692 & 0.2345 0.7571 & 0.4518 0.2783 & 0.3410 0.6269\\
\hline
\end{tabular}
\label{table:4}
\end{table}

To further validate our algorithm, we show the precision and recall of top 5 and top 15 candidate tags in Table \ref{table:4}. The results of each algorithm are similar to the performance of Top 10, which prove that our two algorithms have better performance when extracting video tags from time-sync comments in any situation.

\begin{table}[htbp]\footnotesize
\centering
\caption{The top5 results of video tags generated by different algorithms}
\begin{tabular}{|c|c|c|c|c|}
\hline
\label{tags}
Video number &AcFun ac2643295\_1 &AcFun  ac2656362\_6 &AcFun  ac2474006\_1 &AcFun  ac2669229\_1 \\
\hline
Screenshot & \mpage{\includegraphics[width=0.18\linewidth]{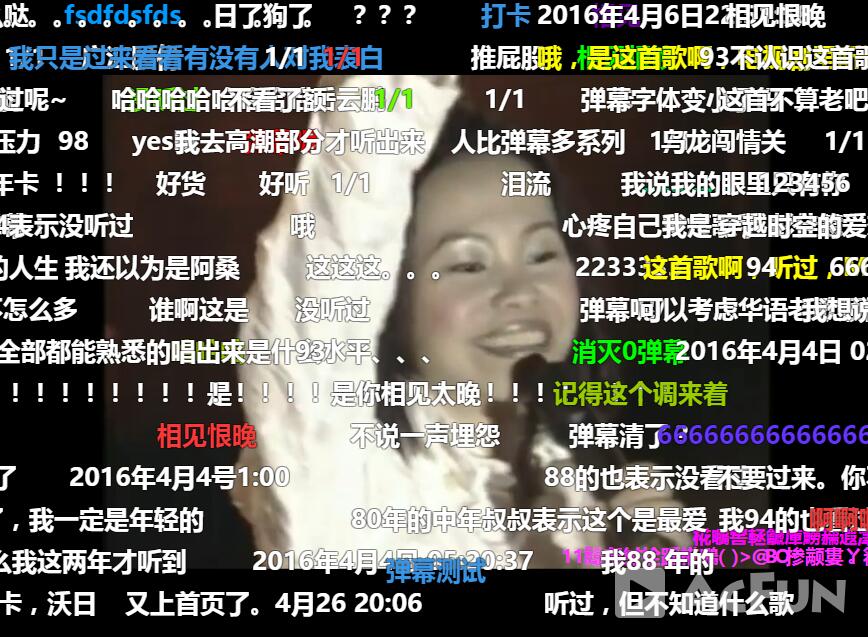}} & \mpage{\includegraphics[width=0.18\linewidth]{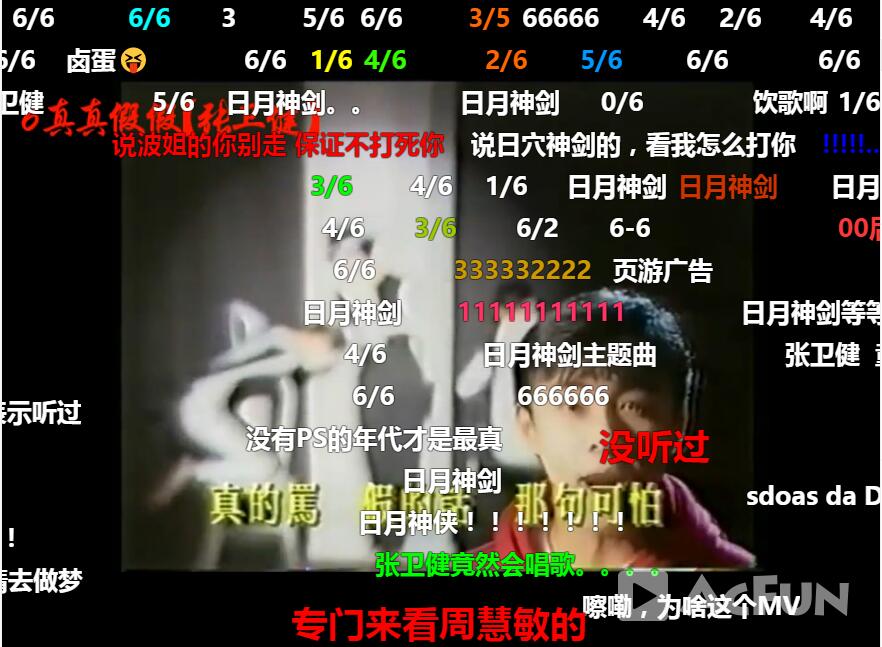}} & \mpage{\includegraphics[width=0.18\linewidth]{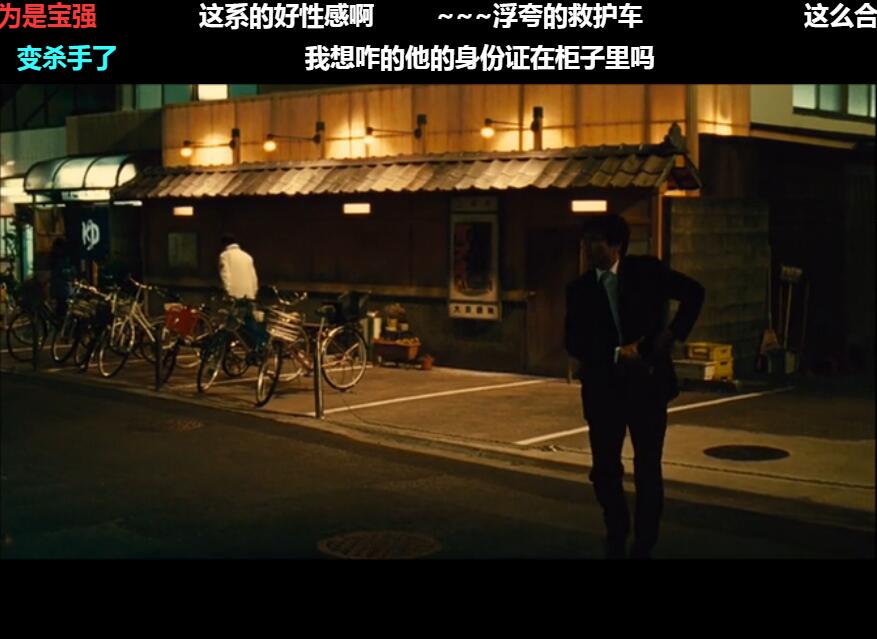}} &  \mpage{\includegraphics[width=0.18\linewidth]{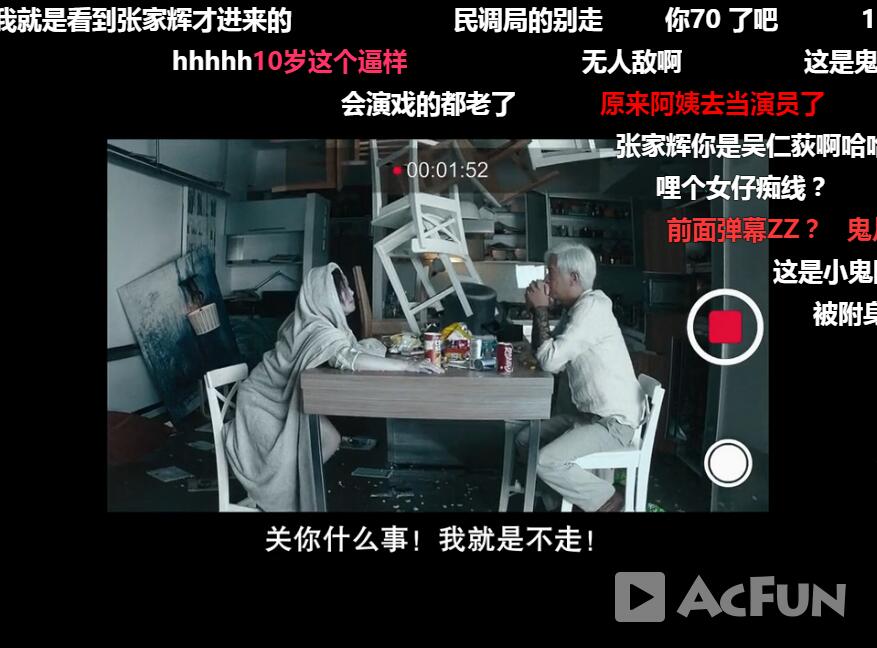}} \\
\hline
Timeline & 0:00:00$\sim$0:01:10 & 0:07:28$\sim$ 0:09:49 & 0:00:00$\sim$1:04:07 & 0:00:00$\sim$0:15:41\\
\hline
Amount & 785& 764& 2933&2460\\
\hline
Density & 672.84& 325.08& 45.78&156.84\\
\hline
TF-IDF & \tabincell{c}{\emph{\textbf{Brief Encounter}} \\ \emph{\textbf{Peng Julia}} \\ \underline{miss} \\ euphonious\\ \emph{\textbf{Wind and Cloud}}}
&\tabincell{c}{ \emph{\textbf{the Twin Swords}}\\  \emph{\textbf{Cheung Wai Kin}}\\  \emph{\textbf{Jen Hsien-chi}}\\  \emph{\textbf{Jimmy Lin}}\\idol}
  & \tabincell{c}{\emph{\textbf{killer}}\\ \emph{\textbf{Ryoko}}\\ ID card\\cell phone\\ \underline{acting skill}}
  &\tabincell{c}{ \emph{\textbf{Cheung Ka Fai}}\\ \underline{alert}\\shock\\  ghost\\  \emph{\textbf{Louis Cheung}}}\\
\hline
TextRank &\tabincell{c}{\emph{\textbf{Brief Encounter}}\\euphonious\\ \underline{our}\\ \emph{\textbf{Peng Julia}}\\ \underline{know}}
&\tabincell{c}{ \emph{\textbf{Jen Hsien-chi}}\\ \emph{\textbf{Cheung Wai Kin}}\\ \underline{hair}\\ \underline{wonderful}\\memory}
 &\tabincell{c}{\emph{\textbf{killer}}\\ ID card\\ actor\\ \emph{\textbf{Japan}}\\ \emph{\textbf{corpse}}}
  &\tabincell{c}{ \emph{\textbf{Cheung Ka Fai}}\\ \underline{ alert}\\movie\\ terror\\ \underline{feel}} \\
\hline
BTM &\tabincell{c}{\emph{\textbf{Brief Encounter}}\\euphonious \\ \underline{know}\\ \underline{brave} \\ childhood}
 &\tabincell{c}{ \emph{\textbf{Jen Hsien-chi}}\\ \underline{hair}\\ \underline{wonderful}\\memery\\  \emph{\textbf{Cantonese}}}
 & \tabincell{c}{\underline{fierce}\\ \underline{perform}\\ \underline{model}\\employer\\ \emph{\textbf{corpse}}}
 &\tabincell{c}{\underline{update}\\ \underline{hobbies}\\ movie\\ \emph{\textbf{Cheung Ka Fai}}\\ \underline{forget}} \\
\hline
GSDPMM & \tabincell{c}{euphonious \\ \underline{follow}\\ \underline{myself}\\ \underline{brave}\\ \emph{\textbf{Wind and Cloud}}}
 & \tabincell{c}{ \emph{\textbf{Jen Hsien-chi}}\\ \underline{hair}\\  \emph{\textbf{Cantonese}}\\ \emph{\textbf{Jimmy Lin}}\\ \underline{love}}
  & \tabincell{c}{\underline{New Year}\\ \emph{\textbf{killer}}\\ ID card\\ \underline{chimney}\\bathhouse}
   &\tabincell{c}{\underline{powerful}\\ \underline{alerf}\\ movie\\ ghost\\ fear}\\
\hline
TPTM& \tabincell{c}{ \emph{\textbf{Brief Encounter }}\\ \underline{stuck}\\ \emph{\textbf{Peng Julia}}\\ euphonious \\ childhood}
& \tabincell{c}{ \emph{\textbf{Cheung Wai Kin}}\\  memory\\ \emph{\textbf{the Twin Swords}}\\ \emph{\textbf{Jimmy Lin}}\\ \underline{hair}}
 & \tabincell{c}{ \emph{\textbf{corpse}}\\ cell phone\\ \emph{\textbf{killer}}\\  \emph{\textbf{Japan}}\\  actor}
 &\tabincell{c}{ \emph{\textbf{Cheung Ka Fai}} \\ movie\\ \underline{forget}\\  ghost\\  \emph{\textbf{dracula movie}}}\\
\hline
SW-IDF(d) & \tabincell{c}{\emph{\textbf{Brief Encounter}} \\ \emph{\textbf{Peng Julia}}\\ euphonious\\ \underline{express}\\ \emph{\textbf{Wind and Cloud}}}
& \tabincell{c}{ \emph{\textbf{the Twin Swords}}\\ \emph{\textbf{Cheung Wai Kin}}\\ \emph{\textbf{Jen Hsien-chi}}\\ \emph{\textbf{Jimmy Lin}}\\idol}
& \tabincell{c}{ \emph{\textbf{killer}}\\ ID card\\  \emph{\textbf{Kreisler}}\\  \emph{\textbf{Japan}}\\ cell phone}
&\tabincell{c}{ \emph{\textbf{Lawrence}}\\ \emph{\textbf{Cheung Ka Fai}}\\ \emph{\textbf{Louis Cheung}}\\ \underline{alert}\\shock\\}\\
\hline
SW-IDF(c) & \tabincell{c}{ \emph{\textbf{Brief Encounter }}\\ \emph{\textbf{Peng Julia}} \\ \emph{\textbf{Wind and Cloud}}\\ euphonious \\
 \emph{\textbf{theme song}}}
& \tabincell{c}{ \emph{\textbf{Cheung Wai Kin}}\\  \emph{\textbf{the Twin Swords}}\\ \emph{\textbf{Jimmy Lin}}\\memory\\ \emph{\textbf{Jen Hsien-chi}}}
 & \tabincell{c}{ \emph{\textbf{killer}}\\ ID card\\  \emph{\textbf{Japan}}\\  \emph{\textbf{Ryoko}}\\ cell phone}
 &\tabincell{c}{ \emph{\textbf{Cheung Ka Fai}} \\ shock\\  ghost\\  \emph{\textbf{dracula movie}}\\ \emph{\textbf{Kuo Tsai-chieh}}}\\
\hline
\end{tabular}
\end{table}

Finally, we show the Top 5 of video tags generated by the algorithms above in Table \ref{tags}. The \emph{\textbf{Bold italic words}} indicate the good tags (the tags that all three volunteers voted), while the \underline{underline words} indicate the bad tags ((the tags that less than two volunteers voted)). The results show that the SW-IDF (Topic Center) and SW-IDF(dialogue) have more good tags and less bad tags than other algorithms, which intuitively demonstrates the superiority of our algorithms.

\section{Conclusion}
\label{sec:5}
In this paper, we proposed a novel video tag extraction algorithm to acquire video tags for time-sync videos. To deal with the features of time-sync comments, SW-IDF was designed to cluster comments into semantic association graph by taking advantage of their semantic similarities and timestamps. In this way, the noises could be differentiated from the meaningful comments, and thus be effectively eliminated. Finally, video tags were well recognized and extracted in an unsupervised way. Extensive experiments on real-world dataset proved that our algorithm could effectively extract video tags with a significant improvement of precision and recall compared with several baselines, which obviously validates the potential of our algorithm on tag extraction, as well as tackling with the features of time-sync comments.

\begin{acks}

This work is supported by Chinese National Research Fund (NSFC) Key Project No. 61532013 and No. 61872239. NSFC Project No. 61872195 and No. 61702330. FDCT/0007/2018/A1, DCT-MoST Joint-project No. (025/2015/AMJ), University of Macau Grant Nos: MYRG2018-00237-RTO, CPG2018-00032-FST and SRG2018-00111-FST of SAR Macau, China.

\end{acks}

\bibliographystyle{ACM-Reference-Format}
\bibliography{sample-acmsmall}

\end{document}